\documentclass[sigconf]{acmart}%\documentclass[journal]{IEEEtran}

\acmConference{ACM Mutimedia conference}{2018}
%\acmYear{2018}
%\copyrightyear{2016}

\usepackage{hyperref}
\usepackage{amsmath,amsthm}
\usepackage{graphics, subfigure}%, amsthm, times, amsfonts}%, cite}
\usepackage{tikz}
\usepackage{mathtools}
\usetikzlibrary{shapes,arrows}
\usepackage{algorithm, algorithmic}
\usepackage{soul}

\usepackage[outdir=./]{epstopdf}
\usepackage{color}
\usepackage{multirow}
\usetikzlibrary{automata,positioning}
\usepackage{wrapfig}
\usepackage{rotating}
\usepackage{bbm}
\usepackage{url}

\usepackage{enumitem}

\newtheorem{assumption}{Assumption}
\newtheorem{theorem}{Theorem}

\newtheorem{proposition}{Proposition}
\newtheorem{definition}{Definition}
\newtheorem{remark}{Remark}

%\author{Arnob Ghosh}
%\affiliation{%
%	\institution{Purdue University}
%	\city{West Lafayette}
%	\country{USA}}
%\email{ghosh39@purdue.edu}
%
%\author{Vaneet Aggarwal}
%\affiliation{%
%	\institution{Purdue University}
%	\city{West Lafayette}
%	\state{IN}
%	\postcode{47906}
%	\country{USA}}
%\email{vaneet@purdue.edu}
%\newtheorem{proposition}{Proposition}

\title{A Robust Algorithm for Tile-based 360-degree Video Streaming with Uncertain FoV Estimation}

%\author{Arnob Ghosh and Vaneet Aggarwal\thanks{The authors are with the School of IE, Purdue University, West Lafayette, IN 47907, email: \{ghosh39,vaneet\}@purdue.edu}}

\begin{document}

\begin{abstract}
We propose a robust scheme for streaming 360-degree immersive videos to maximize the quality of experience (QoE).
Our streaming approach introduces a holistic analytical framework built upon the formal method
of stochastic optimization. We propose a robust algorithm which provides a streaming rate such that the video quality degrades below that rate with very low probability even in presence of uncertain head movement, and bandwidth. 
It assumes the knowledge of the viewing probability of different portions (tiles) of a panoramic scene. Such probabilities can be easily derived from crowd-sourced
measurements performed by 360 video content providers.
We then propose efficient methods to solve the problem at runtime
while achieving a bounded optimality gap (in terms of the QoE). 
We implemented our proposed approaches using emulation. Using real users' head movement traces and real cellular bandwidth
traces, we show that our algorithms significantly outperform the baseline
algorithms by at least in 30\% in the QoE metric. Our algorithm gives a streaming rate which is 50\% higher compared to the baseline algorithms. 
\end{abstract}

\maketitle
%\begin{IEEEkeywords} 360-degree video, tiles, streaming algorithm, field-of-view, bandwidth prediction, head-movement prediction, robust optimization.
%	\end{IEEEkeywords}

%\vspace{-.1in}
\section{Introduction}
%\vspace{-.1in}
%\subsection{Motivation}
360-degree videos provide users a panoramic view and create a unique viewing experience. These videos are recorded using the omnidirectional cameras. While watching the video, the user can change the viewing direction by moving her head. Typically, the user wearing a VR headset (e.g., the Google Cardboard \cite{cardboard}) can adjust her orientation by changing the pitch, yaw, and roll of the device which corresponds to the X, Y and Z axes, respectively (Fig.~\ref{fig:360}). The field-of-view (FoV) defines the extent of the user's observable portion. It is typically fixed for a VR headset (e.g., 90-degree vertically and 110-degree horizontally \cite{feng}).

The mainstream technologies for streaming 360 videos fetch all panoramic scenes including both the visible and invisible portions\cite{fb,youtube}.
Though this is simple, it has some disadvantages. In particular, the bandwidth utilization is high as the chunks of 360-degree videos are of much larger sizes compared to the traditional ones. Thus, if the  bandwidth is low, it will lead to poor viewing quality or stall (rebuffering). Even the wireline capacity may not be enough for HD 360-degree videos \cite{wireline}. Although significant progress has been made in developing VR technologies, the research community still lacks robust bandwidth-efficient streaming algorithms for 360-degree videos for maximizing the QoE of the users.
%
%\fengout{There may be quality variation across different chunks, as different chunks may be fetched at different rates which degrades the quality of experience (QoE).  Hence, without smart algorithms, streaming can easily consume the wireless bandwidth \cite{bandwidthhungry} which leads to degraded QoE. Further, the field-of-view (FoV), and the bandwidth evolve randomly over time. Thus,  without  a robust algorithm, there is no absolute guarantee of the QoE experienced by the user. In the video streaming experience, the tail-index of the video quality distribution often defines the user's QoE.}
%

%\if 0

\begin{figure}[hbtp]
\vspace{-0.1in}
\begin{center}
	\includegraphics[width=2in]{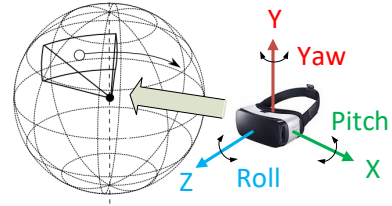}
\end{center}
\vspace{-0.1in}
\caption{Adjust viewing direction during 360 degree video playback}
\label{fig:360}
\vspace{-0.1in}
\end{figure}

%\fi

%\vspace{-.1in}

% \subsection{Our Contributions}
%In this paper, we propose a robust scheme for tile-based 360 video streaming.
%\fengout{We assume that the 360-degree video service provider wants to maximize the quality of experience (QoE) given the limited bandwidth. The user gets a utility depending on the rate at which a chunk is being downloaded.}
 %We consider that each chunk consists of several tiles.
In a tile-based video streaming,
instead of downloading the entire panoramic view at a same rate, a video player can download the portions of the chunk which are more likely to be viewed at a higher rate.
This requires spatially segmenting a 360-degree video chunk into multiple segments called \emph{tiles}. A tile (as opposed to a chunk) is the smallest downloadable content unit.
Tile-based 360 video streaming has been investigated by prior studies~\cite{bao,feng,tile1}.  Designing a tile-based streaming algorithm for ABR (adaptive bitrate) 360 videos is very challenging because the algorithm needs to make judicious decisions in both the quality dimension and the spatial (tile) dimension. We note that most existing studies suffer from several limitations.
(1) They mostly take heuristics-based approaches that may lead to suboptimal viewing experiences.
(2) They typically predict \emph{individual} users' head movements to determine which tiles to fetch. Meanwhile, another opportunity for prefetching tiles is to leverage \emph{crowd-sourcing}, i.e., if many other users are interested in viewing a certain direction at certain time, then the corresponding tiles can be prefetched. Despite the simple idea, it is unclear how to incorporate crowd-sourcing into ABR streaming algorithms in a principled manner.
(3) There lacks theoretical rigorousness in terms of maximizing the QoE in the face of fluctuating bandwidth and users' fast head movement.
(4) Bandwidth, and the head movement (thus, FoV) varies randomly over time. There lacks a robust algorithm which can provide a rate such that the streaming rate degrades below that value with very low probability. 
%\fengout{We assume that the distribution of the FoV can be obtained from the viewing history, the nature of the video, and even from the head movement of the user. Given the distribution of the FoV and the bandwidth estimation, our goal is to design a robust streaming algorithm that maximizes the QoE.}

% However, none of these studies proposed a full-fledged streaming algorithm that adapts the streaming behavior at both the temporal domain (which quality to fetch for each chunk) and the spatial domain (which tiles to fetch for each chunk). Specifically, to the best of our knowledge, our work is the first one to propose a robust video streaming algorithm by considering the variability of both the FoV and the bandwidth.
  % We consider that each tile of a chunk can be encoded at different rates. Each chunk has to be downloaded before its deadline or play time, otherwise a stall will result. Since a user does not want any black spot in the viewing area, we assume that each tile must have to be downloaded at least at the minimum possible rate.  We also assume that the bandwidth can be estimated. Note that the bandwidth determines the download rates of the tiles as higher download rates may increase the stall time if there is not much available bandwidth.

To address the above challenges, our streaming approach introduces a holistic analytical framework built upon the formal method of stochastic optimization. We assume that the user QoE gets a utility based on minimum rate of consumed tiles in each FoV, because the user would like to view all the tiles in the FoV in ideally the same quality to avoid any quality fluctuation. Also the QoE gets penalized when a stall occurs or the inter-chunk quality changes.
We then formulate a stochastic optimization problem by assuming the knowledge of the viewing probability of the tiles. Such probabilities can be easily derived from the combination of crowd-sourced measurements and the FoV prediction of current user performed by 360 video content providers. The formulation jointly
maximizes the consumed bitrate, minimizes the play-back rate variation across chunks, and minimizes the total stall time, based on detailed modeling of the above metrics in the 360 video contexts.

The above optimization problem turns out to be non-convex because of the discrete variable space, and is therefore difficult to solve.
To address this technical challenge, we first relax the constraints and formulate a relaxed problem that is convex.
We then design an algorithm that computes the final scheduling decisions
by strategically mapping the solutions of the relaxed problem back to the original problem. The QoE gap between the solution to the original formulation (the optimality) and our obtained solution is proved to be bounded. Based on the above algorithms, we then design its online version that can adapt to the fluctuating bandwidth and uncertain head movement at runtime with low overhead incurred.
We implemented our proposed algorithm using an emulation-based approach. The emulated video player performs real network transfers of synthetic tiles over TCP/IP. The bandwidth is also emulated using realistic cellular network traces. We use real users' head movement traces to evaluate our proposed algorithms.
Using the traces of 40 users, we compute the empirical distribution of the FoV. We then apply our algorithm to the traces of 10 users.  The emulation results show that our proposed algorithms significantly outperform the baseline algorithms by at least in 30\% in the QoE metric. The baseline algorithms are considered to be the one which fetches all the tiles rather prioritizing those tiles which have higher chances to be viewed. We show that the average rate with which users will watch the video is almost 50\% higher than the baseline algorithms. Our empirical result also shows that our proposed algorithm outperforms the baseline algorithms by at least 60\%  in presence of higher bandwidth uncertainty. %  \feng{[[mention more results]].} %We consider two types of baseline algorithms, the {\em first} algorithm downloads the entire chunk in the same quality. The second one is a greedy type of algorithm and similar to the ones proposed in \cite{bao,feng}. This algorithm only fetches the tiles which have the highest probability to be part of the FoV.  In both of these baseline algorithms, we assume that if the bandwidth permits they will first try to fetch the current chunks at  higher qualities rather the subsequent chunks. We show that the QoE is greatly enhanced using our proposed algorithms compared to the above baseline algorithms. As our algorithm achieves at least 20\% higher QoE compared to the baseline algorithms. We also show that our algorithm provides 20\% higher average bit-rate within the FoV. Our algorithm significantly outperforms the baseline algorithms when the uncertainty regarding the bandwidth and the FoV prediction increase.

To summarize, our main contributions are as follows.

\begin{itemize}[leftmargin=*]

\item To the best of our knowledge, we propose the first principled algorithm that employs the formal method of stochastic optimization for streaming 360-degree videos. 
    It can directly work with crowd-sourced viewing statistics to make judicious decisions of prefetching tiles.
    Our approach explicitly takes into account the probabilistic nature of viewers' head movement randomness when watching 360-degree videos. 

%\fengout{To the best of our knowledge, this is the first work for 360-degree video streaming which formulates the QoE as the rate at which the user will see the video with probability $\alpha$ considering the statistics of the bandwidth estimation as well as the  distribution of the FoV.} %This is the first attempt to characterize the optimal strategy to maximize the QoE. %Note that similar metrics are already proposed in the communication channel (outage capacity \cite{tse2005,robust}), electricity market (robust optimization \cite{tsg_chen}), and cloud storage (tail latency \cite{infocom}).  The detailed formulation considers the statistics of the bandwidth estimation as well as the the distribution of the FoV.

\item The above QoE maximization problem is difficult to solve due to its non-convex nature. We propose efficient methods to solve it while achieving a bounded optimality gap (in terms of the QoE) that is independent of the number of chunks. We also design the online versions of the algorithm that dynamically adjust to the fluctuating bandwidth and uncertain head movement at runtime.

%However, there is no known polynomial time method to solve the discrete optimization problem. We, first, formulate a relaxed version of the problem by relaxing the integer discrete set. We propose an algorithm (Algorithm \textbf{360-ROBUST}) which yields a feasible solution from the optimal solution of the relaxed problem. We show that the optimality gap is independent of number of frames.
%\item We also provide an polynomial time algorithm to find feasible solution, and show that it is {\em optimal} under the scenario where the FoV prediction is more accurate in the near future rather than the distant future. We provide a polynomial time algorithm which minimizes the rate switch among two consecutive frames.
%\item We provi We provided a polynomial time algorithm to find a feasible solution. In fact, we show that the polynomial time algorithm is optimal under some assumptions which can be frequently observed in practice.

\item We implemented our proposed approaches using emulation. Using real users' head movement traces and real cellular bandwidth traces, we show using our emulation testbed that our proposed algorithms can achieve significantly higher QoE , higher average streaming rate compared to the baseline algorithms.
\end{itemize}

%\begin{itemize}[leftmargin=*]
%\item We formulate a QoE metric where we maximize the guaranteed rate that can be achieved with probability $\alpha$ while minimizing the stall time and the quality variation across the chunks.
%\item We formulate the optimization problem where the video service provider maximizes the above QoE metric.
%\item We provide low-complexity algorithms for  the proposed QoE metric. Each algorithm is suitable depending on the weight corresponding to the rate variation across the chunks.
%\item We use the real-time traces for the bandwidth and the FoV. Using our estimation approach for the real-time traces, we show that our proposed algorithms can achieve significantly higher QoE (at least in 30\% better) given the same bandwidth compared to the baseline algorithms.
%\end{itemize}

%\vspace{-.1in}
\section{Related Literature}Recently, novel encoding schemes for 360-degree video streaming have been proposed \cite{Hosseini1,D'Acunto,occulus}. However, the above papers did not study the optimality of video streaming algorithms in terms of maximizing the perceived QoE  over the limited bandwidth.

The authors of  \cite{yin,jiang} proposed algorithms for online video streaming under limited bandwidth which maximizes the QoE. However, these papers did not consider the {\em 360-degree videos}. The 360-degree videos pose unique challenges and thus require new metric for the QoE. For example,  in the 360-degree video, each chunk consists of several tiles. Hence, a video streaming algorithm now needs to find the rate at which each tile of a chunk has to be downloaded. In contrast, the above papers only need to find the rate at which a chunk has to be downloaded. In the 360-degree video, the FoV depends on the tiles of a chunk a user is viewing, the user may not view all the tiles, or in a viewing some tiles may be of different qualities which can impact the QoE. Hence, a new QoE metric is required for the 360-degree videos depending on the FoV.

 \cite{Xiao:2017,Xie:2017,Corbillon:2017} recently formulated a mathematical model of QoE and proposed heuristic based algorithms. Heuristic based algorithms for 360-degree video streaming have been proposed   \cite{hosseini, ganainy, Graf, Petrangeli}.   \cite{LeFeuvre, feng, bao} proposed an algorithm for streaming of tile-based 360-degree video streaming.  In all the above papers  authors considered the short term prediction of the head movement and fetched tiles only along the predicted FoV or adjacent to the predicted FoV. Thus, if the FoV prediction is wrong, the rate will be very low. Hence, the above papers did not provide any probabilistic guarantee on the rate achieved. Further, the FoV prediction is only accurate for shorter time scale. Hence, these papers did not consider the temporal rate variation across the chunks as the QoE metric. Thus, the rate may be different across different chunks. Apart from the current head position of the users, we also use the FoV from the crowd sourced data to estimate the FoV in a longer time scale. Thus, our algorithm can optimize over a longer time scale. In our QoE metric, we consider the rate variation across the chunks, thus, our algorithm minimizes the variation of rate across the chunks. Finally, since all the above papers only optimize the rate in a shorter time scale, if the bandwidth is poor there may be a stall, in contrast, we use the excess bandwidth to fetch the later chunks at a higher rate, thus, our algorithm do not increase the stall time even when the bandwidth is poor. %Finally, since we consider the optimization over a longer time scale unlike the above papers, our algorithm can  The head movement can only be known online and for a shorter time period, thus, the future videos can not be downloaded which can lead to a larger stall time when the bandwidth may be poor. Thus, we assume that each tile is fetched at least at the lowest quality so as not to stop to go back to fetch the chunks as that would also make the predictions very inaccurate.  Since the user likely wants to see the FoV in the same rate, the minimum rate in FoV determines the quality. Thus, this is the first work that considers robust optimization with the FoV and bandwidth prediction errors.   Further, there is no optimality guarantee of the  proposed heuristic algorithms in the prior works. We have characterized the optimality gap of our proposed algorithm.  The above papers also did not consider the quality variation across the chunks in the QoE metric.

\section{System Model}\label{sec:system}
Suppose that the video consists of  $K$ chunks.  Each chunk corresponds to certain temporal segment of the video. Each chunk is of duration $L$ seconds. Hence, the total duration of the video is $KL$ secods. We consider non-real time video streaming, {\it i.e.}, a video has been recorded previously and needs to be streamed to the users as per their requests. However, the algorithms we propose can be extended to the live TV streaming with a minor modification.
Each chunk $k=1,\ldots, K$ has $N$ tiles in total of the same duration\footnote{We do not assume any fixed shape and size of the FoV. However, the entire chunk has to be divided in different number of tiles. Our analysis will go through even if the number of tiles is different across the chunks. }. The tiles constitute sub-areas within a chunk.
A tile has the same duration as the chunk.  %\feng{Can we have a table that lists all notions?}

\textbf{Field of View}: A viewing area for chunk $k$ belongs to the set  $\mathcal{V}_k$ which is the set of sub-sets of $l$ tiles among all $N$ tiles. Thus, the field of view (FoV) $V_{f,k}\in \mathcal{V}_k$ consists of $l$ tiles of a chunk. The video player can choose to download tile $i\in \{1,\ldots, N\}$ of chunk $k$ at the rate $R_{i,k}$. Thus, $L\sum_{i}^{N}R_{i,k}$ is the size of the chunk $k$. %To illustrate the FoV, consider the following example, suppose the tile representation of the entire viewing area be $4\times 8$. If the FoV consider

We assume that the video player adapts the rate of the video at the minimum rate among the tiles in the FoV  which are downloaded. \begin{color}{black}Thus, even though the tiles are downloaded at different rates, the viewer would not like to see spatial quality differences within the FoV. Thus, determining the QoE by the minimum of playback quality of the tiles in the FoV will lead to lower quality variations. Further, SVC encoding allows tiles to be played at lower quality than that downloaded, and thus all the tiles in the FoV can be played at the minimum of the quality of the tiles in FoV.\end{color}

\begin{color}{black}Seeing the FoV with different tiles at different qualities may not be perceived as high QoE to the user. As an example, imagine that half screen is at low quality and half at high quality, that will make the difference of quality in a frame very visual. The differing quality in a chunk would create spots which might cause it to be visually less appealing. Thus, we believe that the streaming service may want to display spatially all the tiles in FoV in the same quality. In the traditional video streaming, the different segments of the chunk are the same quality and the quality is not reduced for parts spatially. Thus, it makes intuitive sense to not have significant spatial quality variations.\end{color}

\begin{color}{black}
We note that even though this assumption of not having spatial quality variation have not been validated in the past as a quality metric, one knows that the spatial quality variations will hurt the perceived quality. We assume that a huge penalty for the spatial quality variations thus aiming to improve the minimum of the spatial quality in the FoV. 
\end{color} %even though the tiles are downloaded at different rates, it is displayed at the minimum rate among all the tiles within the FoV. We note that the center as well as the periphery is viewed at the same quality. Thus, there is no quality difference among the tiles within the FoV which will increase the QoE. %\feng{[[Why are you adopting this assumption? Justify it. One possible justification is that it minimizes the quality differences among tiles, leading to better user QoE.]]}

% if the tile $i$ of chunk $k$ is downloaded at the rate $R_{i,k}$

\textbf{Encoding Scheme}: %We consider the scalable video coding (SVC) for encoding the video\cite{}. Although much newer than many existing encoding schemes, SVC has existed for a while: it received the final approval to be standardized as an amendment of the H.264/MPEG-4 AVC standard in 2007\cite{}. The basic idea of SVC is to encode a chunk into ordered layers: one base layer (Layer 0) with the lowest playable quality, and multiple enhancement layers (Layer i >0) that further improve the chunk quality based on layer i ? 1. When downloading a chunk, a player must download all layers from 0 to i before fetching layer i + 1.
We consider that each chunk is encoded in one of the predefined quality layers. Each layer corresponds to a bit-rate.
The bit-rate $R_{j}$ denotes the rate at the $j$-th layer. Thus, the tile $i$ within a chunk $k$ can be downloaded at rate  $R_{i,k}\in \{R_0,R_1,\ldots, R_{m}\}$ where $R_{m}>\ldots>R_1>R_0$. Hence, the set of possible rates is the same for each tile. 

Note that if the FoV coincides with the tiles where those tiles have been downloaded at zero rate, the user may see black spots. In order to avoid that  we assume that the video player has to download a tile at least at its minimum rate $R_0$. Thus, even if the FoV prediction is erroneous, the user will not watch any black spots. Note that in contrast, the most literature \cite{hosseini, ganainy, Graf, Petrangeli, LeFeuvre, feng, bao} considered to download tiles only within the predicted FoV. Thus, if the prediction is erroneous, the user will  see black spots. 

Our approach can work both with the Adaptive Video Coding (AVC) \cite{avc}  or Scalable Video Coding (SVC) \cite{svc}.
%\feng{Change ``DASH'' to AVC (Adaptive Video Encoding) since DASH can work with either AVC or SVC. Mention the key difference between AVC and SVC is that in AVC, each video chunk is stored into independent encoding versions while in SVC the encoding of one version may depend on that of another version.}
 The main difference between AVC and SVC is that in AVC, each video chunk is stored into independent encoding versions while in SVC the encoding of one version may depend on that of another version. AVC is easy to implement and thus, it is the most popular one. However, SVC's unique encoding scheme has many advantages over the other state-of-the-art schemes.  For example,  if the chunk in AVC is not fully downloaded before its playback deadline, there will be a stall. However, this can be easily mitigated in the SVC: if a chunk can not be downloaded at layer $j+1$, it still can be played without any stall for layer up to $j$. {\color{black} We note that with AVC encoding, the tile can not be played at lower quality than the one fetched, but the quality metric will try to improve minimum quality in FoV thus helping reduce spatial quality variation in the viewed tiles of a chunk. }
%We highlight the coding scheme that will be useful in each problem formulation or implementation of our proposed algorithms. %In the 360-degree videos, the FoV has to be estimated. However, if the actual view is different compared to the predicted FoV, the SVC can still play at the lowest possible rate if it is pre-fetched before. On top of this base layer, enhancement layers can be downloaded in the real time.  Also note that if there is a quality variation among different tiles within a FoV, using the SVC encoding scheme the video player can adjust to play at the minimum rate among the tiles which is not possible in the DASH. %The rate or quality with which the chunk $k$ is played depends on the quality of the tiles in a viewing area which has been downloaded. Hence, if the FoV is $V_{k}\in \mathcal{V}_k$, then the quality is $\min_{j\in V_{k}}R_{j,k}$. Even if the tiles within a FoV are downloaded at different rates, the video player can adjust to play at the minimum rate among the tiles.

 %as shown on the right side of Figure 1.As of today, SVC has not yet registered wide deployment in commercial video platforms likely due to itscomplexity and a lack of use cases. Also as expected, much less academic research has been conducted onSVC compared to DASH-style schemes.

% Hence,

\textbf{Download time and Bandwidth}: Different chunks of video are downloaded sequentially into a {\em playback buffer} which contains downloaded but as yet unwatched video. Let $B(t)\in [0,B_{max}]$ denote the buffer occupancy at time $t$, {\it i.e}., the play-time of the video left in the buffer at time $t$.  Chunk $k+1$ can be started to download once the chunk $k$ is downloaded. Once the chunk $k$ is downloaded, the video player waits a time $\zeta_k$ to start downloading the chunk $k
+1$. Specifically, $\zeta_k$ is positive only when the buffer is full, otherwise, it is $0$. Let $t_k$ be the start time of downloading the chunk $k$. We have:
\begin{align}\label{eq:tk}
t_{k+1}=t_{k}+\dfrac{L\sum_{i}R_{i,k}}{C_k}+\zeta_k,\quad t_0=0.
\end{align}
 where $C_k$ is the average bandwidth while downloading the chunk $k$, {\it i.e.}, in the interval $[t_k,t_{k+1}]$. We assume that $C_k$ can be estimated which we explain in detail in Section 4.2.

\textbf{Play time of a chunk}: The buffer occupancy evolves as new chunks are downloaded. When a chunk is downloaded the buffer occupancy increases by $L$ and when the chunk is played, it decreases by $L$. We denote the play-time of the $k$-th chunk as $\tilde{t}_k$, i.e., the $\tilde{t}_k$ is the time when the $k$-th chunk starts playing. If the chunk $k$ is not downloaded before its designated time $\tilde{t}_{k-1}+L$, it has to wait till chunk $k$ is downloaded. The time at which chunk $k$ is downloaded is given by $t_{k-1}+\dfrac{L\sum_{i}R_{i,k}}{C_k}$. %\fengout{Note that the $k$-th chunk can only start playing only when the k-th chunk is downloaded. Hence,} 

Hence, for chunk $k>1$, the play time is:
\begin{align}\label{b_k}
\tilde{t}_{k}=\max\left\{\tilde{t}_{k-1}+L,t_{k-1}+\dfrac{L\sum_{i}R_{i,k}}{C_k}\right\}, \quad \tilde{t}_1=t_{\mathrm{ini}}.
\end{align}
where $t_{\mathrm{ini}}$ is the initial start-up time or initial stall time. This is the time the first chunk starts playing. The initial start-up time is often considered to be constant.% \feng{[[give intuition of (2)]]}
%The buffer occupancy evolves as new chunk is downloaded. When a chunk is downloaded it is increased by $L$ and when the  chunk is played, it decreases. The evolution of the buffer occupancy is characterized in the following--
%\begin{align}\label{b_k}
%B_{k+1}=((B_k-\dfrac{d_k(\sum_{i=1}^{N_k}R_{i,k})}{C_k})^{+}+L-\alpha_k)^{+}.
%\end{align}

\textbf{Stall time}: Note that each chunk constitutes $L$-second worth of video content. Hence, if $\tilde{t}_{k}>(k-1)L+t_{\mathrm{ini}}$, then, there will be stall or re-buffering. Thus the total stall time is $(\tilde{t}_K-(K-1)L-t_{\mathrm{ini}})^{+}$.  Note from (\ref{b_k}) that $\tilde{t}_{K}\geq (K-1)L+t_{\mathrm{ini}}$.  Thus,
\begin{align}\label{stall}
(\tilde{t}_K-(K-1)L-t_{\mathrm{ini}})^{+}=\tilde{t}_K-(K-1)L-t_{\mathrm{ini}}.
\end{align}
 %If the FoV prediction is wrong, and tiles are not downloaded in FoV, there could be additional stalls. Also, this stall may make the entire FoV prediction incorrect. In order to avoid such stalls, we assume that each tile of every chunk is downloaded at the minimum rate.
%$\dfrac{d_k(\sum_{i=1}^{N_k}R_{i,k})}{C_k}>B_k$, the buffer becomes empty while still downloading the chunk $k$. Hence, there will be stall or re-buffering.

\textbf{Maximum Buffer Occupancy}: As mentioned before we assume that as soon as chunk $k$ is downloaded, the video player starts downloading the chunk $k+1$. Hence, $\zeta_k$ is $0$ when the buffer is not full. However one exception is that when the buffer is full, the player waits for the buffer occupancy level to decrease so that the next chunk can be downloaded.

Without loss of generality, we assume that the maximum buffer occupancy is a multiple of $L$.\footnote{It is straight forward to extend to the setting when the maximum buffer occupancy is not a multiple of $L$. The constraints will change, however, the set of constraints will still constitute a convex set. Hence, the analysis will go through.} The buffer can store at most $B$ chunks. Hence, we must have
\begin{align}\label{eq:max_buffer}
t_{B+k}\geq \tilde{t}_k\quad k=1,\ldots, K-B.%\alpha_k=((B_k-\dfrac{d_k(\sum_{i=1}^{N_k}R_{i,k})}{C_k})^{+}+L-B_{max})^{+}.
\end{align}
$\zeta_k$ is adjusted in (\ref{eq:tk}) such that the above constraint is satisfied. Currently, the maximum buffer occupancy $B$ can be very high, and, therefore, the above constraint is almost always satisfied. % with $\zeta_k=0$ for all $k$. \feng{[[so what? how does this affect the algorithm design?]]}

\textbf{User's utility}:
The user obtains an utility $U(\cdot)$ depending on the rate at which the video is being played. For example, if the chunk is played at rate $R$, the user's utility is $U(R)$ which denotes the user's satisfaction for getting the chunk at rate $R$. $U(\cdot)$ is a strictly increasing function as the user strictly prefers a higher rate.
The play-back rate for a chunk in the FoV is governed by the lowest rate amongst the tiles within the FoV since the video is played at the lowest rate among all the tiles within the FoV.

We assume that
\begin{assumption}
$U(\cdot)$ is a concave strictly-increasing function.
\end{assumption}
%\vspace{-.1in}

The intuition behind the above assumption is that user's QoE increases very fast initially with the play back rate of the video, however, the user becomes indifferent when the play back rate exceeds a certain threshold. 
This is a standard assumption as the user utilities are often assumed to be concave (e.g., smart grid \cite{low2}, quality of service for multimedia \cite{wang}). $U(x)=x^{\alpha}$ where $\alpha\leq 1$ is an example of a concave function.

%\vspace{-.15in}
\section{Problem Formulation and Equivalent Representation}\label{sec:robust}%
We, first, provide the QoE metric of a user. The QoE is a random variable since the FoV, and the bandwidths are random. Thus, we represent a QoE metric which is robust. We, finally, formulate the robust optimization problem. 
\subsection{QoE Representation}
Let $\mathcal{F}_k$ be the set of tiles within the FoV at chunk $k$. Recall that to avoid spatial quality variation, all tiles in the FoV of chunk $k$ should  be played at quality $\min_{i\in \mathcal {F}_k}R_{i,k}$.
%\feng{[[Why due to SVC? AVC is different from this?]]}
Hence, the  utility is governed by the minimum rate among those tiles within the set $\mathcal{F}_k$.
Thus, the video player  wants to maximize $\sum_{k=1}^{K}U(\min_{i\in \mathcal{F}_k}R_{i,k})$.

The video provider also wants to reduce the quality variation across the chunks with a high probability. Hence, it wants to minimize the quality variation among the tiles within the FoV across two consecutive chunks. In other words the video service provider wants to minimize
\begin{align}\label{eq:diff1}
\sum_{k=1}^{K-1}|\min_{i\in \mathcal{F}_{k}}R_{i,k}-\min_{i\in \mathcal{F}_{k+1}}R_{i,k+1}|
\end{align}
QoE also decreases as the stall time increases, hence, the QoE metric is given by
\begin{align}\label{eq:qoe}
\text{QoE}= & \sum_{k=1}^{K}U(\min_{i\in \mathcal{F}_{k}}R_{i,k})-\lambda(\tilde{t}_K-(K-1)L-t_{\mathrm{ini}})\nonumber\\ & -\eta\sum_{k=1}^{K-1}|\min_{i\in \mathcal{F}_{k}}R_{i,k}-\min_{i\in \mathcal{F}_{k+1}}R_{i,k+1}|
\end{align}
$\lambda$ is the weight factor corresponding to the stall time (cf.(\ref{stall})) and $\eta$ is the weight corresponding to the quality variation across the chunks. 
\begin{remark}
The optimal weights ($\lambda, \eta$) corresponding to different factors of the QoE metric are set depending on the video player's objective. In general those values are often obtained from the user's sensitivity analysis of the stall time duration and the rate-variation across the chunks. %through a learning process. \feng{[[what is the learning process? Learning what?]]}
\end{remark}

\subsection{Robust QoE metric and Problem Formulation}
In the previous subsection, we represent the QoE when the FoV of every chunk is known apriori. However, the FoV evolves randomly with time depending on the head movement of the user. 
We consider that the video service provider wants to provide a rate $r$ to the users such that the play back rate will be greater than or equal to that rate $r$ with a high probability. % \feng{[[``guaranteed'' contradicts with ``high probability''. Change the presentation.]].}
In any kind of entertainment or video streaming, the tail probability is important. If the video quality is poor in even for a small portion of the chunk, it will render poor rating, even though the mean quality is good. Further, the video application would like to serve the FoV using same quality tiles so as not to have quality variations in a chunk affecting QoE. Hence, we need to achieve a trade-off such that the viewing quality is good at least with a certain probability $\alpha$.
We, thus, consider a QoE metric where the utility of the user only depends on the rate at least with which the chunk can be viewable with probability $\alpha$.  %\feng{[[Please first define QoE in 360-degree video streaming.]]}
Note that such metric is already proposed in the communication channel, or the electricity market.  We, now, formulate a QoE metric, and subsequently, propose solution to maximize the above metric.

%The video service provider wants to maintain stability among the rates across various chunks. The stall time should also be low.
%\vspace{-.1in}
%\subsection{Robust Problem Formulation}
We introduce a notation which we use throughout this section.
\begin{definition}
Let $\mathcal{A}_{\alpha,k}$ be the set of tiles of chunk $k$ which has the property that the FoV is a subset of $\mathcal{A}_{\alpha,k}$ with probability at least $\alpha$ for chunk $k$. If there are multiple sets which satisfy the above condition, then, $\mathcal{A}_{\alpha,k}$ is considered to be the set with the lowest cardinality.\footnote{Cardinality of the set is the number of elements in the set.}
Let $\mathcal{A}^{C}_{\alpha,k}$ be the complement of the set $\mathcal{A}_{\alpha,k}$.
\end{definition}
We do not assume that the exact FoV is known beforehand. However, we do assume that the statistics are known. The statistics can be obtained from the crowd sourced data, i.e., FoV distributions different users watching the same video.
Also note that in our algorithm we do not need to know the exact distribution, mean, or variance of the FoV. We need to build the set $\mathcal{A}_{\alpha,k}$, the set of tiles which can be viewed with probability $\alpha$.\footnote{Our analysis will go through even if $\alpha$ is different across the chunks.}
%
%\feng{[[$\mathcal{A}_k$ not defined. You only define $\mathcal{A}_{\alpha,k}$.]]}
%
%We do not need the perfect prediction either. In fact, the lack of perfect prediction of the head movement is one of our motivations behind the work. %\feng{[[Mention this in intro -- why we need such a stochastic optimization approach.]]} 

Our QoE metric  is specified for robust optimization technique. We only need to compute the tiles which constitute the set $\mathcal{A}_{\alpha,k}$, the tiles which constitute the set of tiles can be seen with a high probability. %\feng{[[$\mathcal{A}_k$ not defined.]]}

$\mathcal{A}_{\alpha,k}$ is the set of the tiles of the lowest cardinality which specifies that the FoV will be a subset of this set with probability $\alpha$. Thus, the lowest rate among the tiles of $\mathcal{A}_{\alpha,k}$ gives the lowest rate the user will observe the video with the probability $\alpha$. The video content provider wants to provide a  rate a user such that the play-back rate will be lower that rate only with probability $1-\alpha$.
%\feng{[[Again, i don't like putting ``guarantee'' and ``probability'' together.]]}

%\feng{[[Why due to SVC? AVC is different from this?]]}
The  utility is considered to be governed by the minimum rate among those tiles within the set $\mathcal{A}_{\alpha,k}$.
Thus, the video player  wants to maximize $\sum_{k=1}^{K}U(\min_{i\in \mathcal{A}_{\alpha,k}}R_{i,k})$. %The above will ensure that the viewers will only see lower qualities with probability $1-\alpha$.

\begin{remark}
	$\alpha$ is a parameter which needs to be determined from the market research. In stochastic optimization problems, the robustness is an important issue where the optimizer wants to protect itself from the error in data or high variance. However, a too-high $\alpha$ value may give very pessimistic solution. For robust optimization problems, $\alpha$ is generally taken to be in the interval $[0.9, 0.99]$ \cite{alpha}.
\end{remark}

As stated in Section 4.1, the QoE metric decreases as the rate varies across the chunks (cf. (\ref{eq:diff1})). Hence, the video player wants to minimize the quality variation among the tiles within the set $\mathcal{A}_{\alpha,k}$ across two consecutive chunks. In other words the video service provider wants to minimize 
\begin{align}\label{eq:diff}
\sum_{k=1}^{K-1}|\min_{i\in \mathcal{A}_{\alpha,k}}R_{i,k}-\min_{i\in \mathcal{A}_{\alpha,k+1}}R_{i,k+1}|
\end{align}
Recall from (\ref{eq:qoe}) that QoE also decreases as the stall time increases. Also note from Section 3 that  we assume each tile of every chunk $k$ has to be downloaded at the minimum rate even those tiles are not in the set $\mathcal{A}_{\alpha,k}$ in order to avoid the situation where the user will see block spots when the FoV does not coincide with the exact prediction. Thus, the {\em robust} QoE metric is given by
\begin{align}\label{eq:probust}
\text{QoE}= & \sum_{k=1}^{K}U(\min_{i\in \mathcal{A}_{\alpha,k}}R_{i,k})-\lambda(\tilde{t}_K-(K-1)L-t_{\mathrm{ini}})\nonumber\\ & -\eta\sum_{k=1}^{K-1}|\min_{i\in \mathcal{A}_{\alpha,k}}R_{i,k}-\min_{i\in \mathcal{A}_{\alpha,k+1}}R_{i,k+1}|
\end{align}

%\feng{[[I strongly suggest describing the QoE metrics in general (not in the stochastic context) at the beginning of this section (or even using a separate subsection). Then try to formulate the QoE in the stochastic context.]]}

%Hence, the optimization problem that the video service provider wants %utility is the minimum among the rates of the tiles within the set $\mathcal{A}_{\alpha,k}$. %The minimum rate of the tiles within $\mathcal{A}_{\alpha}$ specifies that the video will be seen at least with that rate. The video service provider's objective is now governed with the minimum rate of among the tiles within the set $\mathcal{A}_{\alpha,k}$. If the minimum rate is poor, then, there is a possibility that the quality of service is poor. On the other hand, if the rate is high, the quality of service is also high at least with probability $\alpha.$
Thus, formally, the optimization problem that the video player  is solving is given by
\begin{eqnarray}
\mathcal{P}:\text {maximize } &\sum_{k=1}^{K}U(\gamma_k)-\lambda(\tilde{t}_K-(K-1)L-t_{\mathrm{ini}})-\nonumber\\
&\eta\sum_{k=1}^{K-1}|\gamma_{k}-\gamma_{k+1}|\label{eq:robust}\\
\text{subject to } &  (\ref{b_k}), (\ref{eq:tk}), (\ref{eq:max_buffer})\nonumber\\
& \gamma_k= \min_{i\in \mathcal{A}_{\alpha,k}}R_{i,k} \label{eq:constraint}\\
%& d_k=|\min_{i\in \mathcal{A}_{\alpha,k}R_{i,k}-\min_{i\in \mathcal{A}_{\alpha,k+1}R_{i,k+1}|\label{eq:diff}\\
 \text{var}: & R_{i,k}\in \{R_0,\ldots, R_{m}\}\nonumber
\end{eqnarray}
Note from (\ref{eq:constraint}) that $\gamma_k$ denotes the minimum rate of the tiles within the set $\mathcal{A}_{\alpha,k}$.

\begin{remark}
 We assume the Constant Bit-Rate (CBR) technique where the spatial encoding complexity is constant through the video.  Our analysis will go through even if the encoding complexity is different for different frames. 
\end{remark}
%\vspace{-.1in}
\subsection{Equivalent Representation}
The constraint set in the optimization problem $\mathcal{P}$ is not in the convex form because of the constraints in (\ref{eq:constraint}) and the discrete strategy space. In the following, we provide equivalent representations of the constraints (\ref{eq:constraint}) which are in convex form. 

Note that in an optimal solution of $\mathcal{P}$ only the tiles within the set $\mathcal{A}_{\alpha,k}$ should be fetched at rates higher than $R_0$ if possible. The rate at which the tiles in the set $\mathcal{A}^{C}_{\alpha,k}$ are fetched do not contribute to the utility.  Hence, the tiles in the set $\mathcal{A}^{C}_{\alpha,k}$ is fetched at  rate $R_0$ in an optimal solution. Recall that  $\mathcal{A}^{C}_{\alpha,k}$ is the complement of the set $\mathcal{A}_{\alpha,k}$.

 If the tiles within $\mathcal{A}_{\alpha,k}$ are not fetched in the same quality, then, the utility is only governed by the minimum among the rates at which the tiles are fetched. Hence, in an optimal solution one should fetch all the tiles in $\mathcal{A}_{\alpha,k}$ at the same rate.  Now, we propose an equivalent representation of the optimization problem.

\begin{theorem}\label{thm:equi}
Suppose that $R^{*}_{i,k}$ is the optimal solution of $\mathcal{P}$. Now, consider the following solution: $R^{\prime}_{i,k}=\min_{i\in \mathcal{A}_{\alpha,k}}R^{*}_{i,k}$ $\forall i\in \mathcal{A}_{\alpha,k}$ and $R^{\prime}_{i,k}=R_0$ $\forall i\in \mathcal{A}^{C}_{\alpha,k}$. Then, $R^{\prime}_{i,k}$ is also an optimal solution.
\end{theorem}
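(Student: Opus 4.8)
The plan is to establish two facts about the candidate $R'_{i,k}$: that it is feasible for $\mathcal{P}$, and that its objective value is at least that of the optimal $R^{*}_{i,k}$. Since $R^{*}$ is a maximizer and $R'$ is feasible, the reverse inequality $\text{QoE}(R')\le\text{QoE}(R^{*})$ is automatic, so the two values coincide and $R'$ is optimal. The engine for both facts is the componentwise bound $R'_{i,k}\le R^{*}_{i,k}$ for all $i,k$: on $\mathcal{A}_{\alpha,k}$ we replace each rate by $\min_{j\in\mathcal{A}_{\alpha,k}}R^{*}_{j,k}\le R^{*}_{i,k}$, and on $\mathcal{A}^{C}_{\alpha,k}$ we drop to $R_0$, the smallest admissible layer, which by the per-tile minimum-rate requirement of Section~\ref{sec:system} is also $\le R^{*}_{i,k}$.

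Feasibility is quick: each $R'_{i,k}$ lies in $\{R_0,\dots,R_m\}$, because $\min_{j\in\mathcal{A}_{\alpha,k}}R^{*}_{j,k}$ is attained at some tile and hence equals one of the layers, while the remaining constraints \eqref{eq:tk}, \eqref{b_k}, \eqref{eq:max_buffer} merely define $t_k$, $\tilde t_k$, and the buffer-full waits $\zeta_k$ from the chosen rates, and are realized by taking the minimal $\zeta_k$ consistent with \eqref{eq:max_buffer}.

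For the objective I would handle the three terms separately. The robust rate is unchanged: for $i\in\mathcal{A}_{\alpha,k}$ every $R'_{i,k}$ equals $\gamma^{*}_k:=\min_{j\in\mathcal{A}_{\alpha,k}}R^{*}_{j,k}$, so $\gamma'_k=\min_{i\in\mathcal{A}_{\alpha,k}}R'_{i,k}=\gamma^{*}_k$ for every $k$; hence $\sum_k U(\gamma_k)$ and $\sum_k|\gamma_k-\gamma_{k+1}|$ take identical values at $R'$ and $R^{*}$. The real work is the stall term $\tilde t_K-(K-1)L-t_{\mathrm{ini}}$, for which I would prove $\tilde t_K(R')\le\tilde t_K(R^{*})$. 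Writing $s_k:=L\sum_i R_{i,k}/C_k$ for the download duration of chunk $k$, the bound $R'_{i,k}\le R^{*}_{i,k}$ gives $s'_k\le s_k$, and I would run a joint induction on $k$ showing that the download-completion time $d_k$ of chunk $k$ and the play time $\tilde t_k$ are each nondecreasing in $(s_1,\dots,s_k)$. Monotonicity of $\max$ then propagates the smaller durations through the recursion $\tilde t_k=\max\{\tilde t_{k-1}+L,\ d_k\}$ of \eqref{b_k} and through \eqref{eq:tk}, yielding $\tilde t'_k\le\tilde t_k$ for all $k$, in particular $k=K$.

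I expect this monotonicity to be the delicate point, because shrinking the chunks makes the playback buffer fill earlier, which can force larger waits $\zeta_k$ in \eqref{eq:tk}. The way I would deal with it is to rewrite the start time of downloading chunk $k$ as the maximum of the finish time of chunk $k-1$ and the play time $\tilde t_{k-B}$ (the second argument encoding exactly the buffer constraint \eqref{eq:max_buffer}), and then carry out the induction on this closed recursion, so that the extra waiting is absorbed and never pushes $\tilde t_K$ upward. Once $\tilde t_K(R')\le\tilde t_K(R^{*})$ is in hand, the stall penalty at $R'$ is no larger than at $R^{*}$, so $\text{QoE}(R')\ge\text{QoE}(R^{*})$; optimality of $R^{*}$ then forces equality, and $R'$ is optimal.
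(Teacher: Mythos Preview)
Your argument is correct and follows the same line as the paper: the paper's justification (given only as the two informal paragraphs preceding the theorem) is precisely that tiles in $\mathcal{A}^{C}_{\alpha,k}$ contribute nothing to the objective, and that within $\mathcal{A}_{\alpha,k}$ only the minimum matters, so lowering all rates to $R'_{i,k}$ preserves $\gamma_k$ while only reducing bandwidth consumption. Where you go further than the paper is in making the stall-time monotonicity explicit, including the buffer-wait subtlety via the rewriting $t_k=\max\{d_{k-1},\tilde t_{k-B}\}$; the paper simply leaves this implicit, so your treatment is strictly more careful on that point.
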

The above theorem will assist us to represent $\mathcal{P}$ in an equivalent form with $\gamma_k$ as decision variables. Note that the tiles within the set $\mathcal{A}_{\alpha,k}$ $k=1,\ldots,K$  are fetched at rate $\gamma_k$ and the other tiles at rate $R_0$, then, the download time in (\ref{eq:tk}) is modified as
\begin{align}\label{eq:downtime}
t_k=t_{k-1}+\dfrac{L(|\mathcal{A}_{\alpha,k}|\gamma_k+|\mathcal{A}^{C}_{\alpha,k}|R_0)}{C_k}+\zeta_k,\quad t_0=0.
\end{align}
where $|\cdot|$ denotes the cardinality of the set.

Finally, we represent the constraint in (\ref{b_k}) as the following:
\begin{align}\label{eq:equi}
& \tilde{t}_k\geq \tilde{t}_{k-1}+L,  \tilde{t}_1\geq t_{\mathrm{ini}} \quad\tilde{t}_k\geq \dfrac{L(|\mathcal{A}_{\alpha,k}|\gamma_k+|\mathcal{A}^{C}_{\alpha,k}|R_0)}{C_k}.
\end{align}
Now, we are ready to represent $\mathcal{P}$ in an equivalent form.
\begin{proposition}
Consider the following optimization problem
\begin{eqnarray}
& \mathcal{P}_{\mathrm{eq}}: \text {maximize } & (\ref{eq:robust})\nonumber\\
& \text{subject to } &  (\ref{eq:equi}), (\ref{eq:downtime}), (\ref{eq:max_buffer})\nonumber\\
%& & \gamma_k\leq R_{i,k} \forall i\in \mathcal{A}_{\alpha,k}\label{eq:robust_constraint}\\
%& & d_k\geq \gamma_k-\gamma_{k+1}\quad d_k\geq \gamma_{k+1}-\gamma_k\nonumber\\
& \text{var}: & \gamma_k\in \{R_0,\ldots, R_m\}\nonumber
\end{eqnarray}
The optimization problem $\mathcal{P}_{\mathrm{eq}}$ is equivalent to (\ref{eq:robust}), and is a discrete optimization problem.
\end{proposition}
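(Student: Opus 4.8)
The plan is to obtain $\mathcal{P}_{\mathrm{eq}}$ from $\mathcal{P}$ in two moves: first collapse the per-tile rates $R_{i,k}$ into the single quantities $\gamma_k$ using Theorem~\ref{thm:equi}, and then show that the equality defining $\tilde t_k$ in (\ref{b_k}) may be relaxed to the inequalities in (\ref{eq:equi}) without changing the optimal value. Discreteness will then be immediate. I will phrase ``equivalent'' as: the two problems have the same optimal value, and from an optimal solution of one we can construct an optimal solution of the other.

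\textbf{Reparametrization.} By Theorem~\ref{thm:equi}, $\mathcal{P}$ has an optimal solution in which every tile of $\mathcal{A}_{\alpha,k}$ is fetched at the common rate $\gamma_k:=\min_{i\in\mathcal{A}_{\alpha,k}}R^{*}_{i,k}\in\{R_0,\dots,R_m\}$ and every tile of $\mathcal{A}^{C}_{\alpha,k}$ at rate $R_0$; conversely, any $(\gamma_1,\dots,\gamma_K)\in\{R_0,\dots,R_m\}^{K}$ induces, through the same formula, a feasible point of $\mathcal{P}$. For such ``canonical'' solutions the total downloaded size of chunk $k$ equals $L(|\mathcal{A}_{\alpha,k}|\gamma_k+|\mathcal{A}^{C}_{\alpha,k}|R_0)$, so (\ref{eq:tk}) becomes exactly (\ref{eq:downtime}), the constraint (\ref{eq:constraint}) holds automatically and can be dropped, and the objective (\ref{eq:robust}) already depends only on $(\gamma_1,\dots,\gamma_K)$ and on $\tilde t_K$. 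Hence $\mathcal{P}$ has the same optimal value as the problem of maximizing (\ref{eq:robust}) over $\gamma\in\{R_0,\dots,R_m\}^{K}$ subject to (\ref{eq:downtime}), (\ref{eq:max_buffer}), and the $\gamma$-form of (\ref{b_k}), i.e.\ the $\max$-recursion with the reparametrized chunk size.

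\textbf{Relaxing the $\max$.} It remains to replace that $\max$-recursion by the inequalities (\ref{eq:equi}). One direction is free: any $\tilde t$ produced by the recursion satisfies (\ref{eq:equi}), so passing to $\mathcal{P}_{\mathrm{eq}}$ only enlarges the feasible set and $\mathrm{opt}(\mathcal{P}_{\mathrm{eq}})\ge \mathrm{opt}(\mathcal{P})$. For the reverse, fix a feasible $(\gamma,\tilde t,\zeta)$ of $\mathcal{P}_{\mathrm{eq}}$ and define $\tilde t'_k$ by the $\max$-recursion of (\ref{b_k}) using the same $\gamma$ and $\zeta$ (which fix all the $t_k$). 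An induction on $k$ using $\tilde t_k\ge\tilde t_{k-1}+L$ and the download-time lower bound, both from (\ref{eq:equi}), gives $\tilde t'_k\le\tilde t_k$ for every $k$. Then $(\gamma,\tilde t',\zeta)$ is feasible for the canonical $\mathcal{P}$: it obeys the $\max$-recursion by construction, it obeys (\ref{eq:downtime}) since $\zeta$ and $\gamma$ are unchanged, and it obeys (\ref{eq:max_buffer}) because $t_{B+k}$ is unchanged while $\tilde t'_k\le\tilde t_k\le t_{B+k}$. Since $\lambda\ge 0$, the objective (\ref{eq:robust}) is non-increasing in $\tilde t_K$, so its value at $(\gamma,\tilde t')$ is at least its value at $(\gamma,\tilde t)$; hence $\mathrm{opt}(\mathcal{P})\ge\mathrm{opt}(\mathcal{P}_{\mathrm{eq}})$, and the two optimal values coincide, with an optimal $\gamma$ of $\mathcal{P}_{\mathrm{eq}}$ yielding an optimal solution of $\mathcal{P}$ via the formula of the previous step. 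Finally, in $\mathcal{P}_{\mathrm{eq}}$ the genuine decision variables are $\gamma_1,\dots,\gamma_K$, each confined to the finite set $\{R_0,\dots,R_m\}$ (the auxiliary $t_k,\tilde t_k,\zeta_k$ being pinned down, up to the slack left by (\ref{eq:max_buffer}), by the recursions), so $\mathcal{P}_{\mathrm{eq}}$ is a discrete problem with at most $(m+1)^{K}$ candidates.

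\textbf{Where the work is.} Everything on the tile side is disposed of by Theorem~\ref{thm:equi}, so the only non-mechanical point is the relaxation step: one must verify that ``pushing each $\tilde t_k$ down to its lower envelope'' is simultaneously compatible with the coupling of $\zeta_k$, $t_k$, $\tilde t_k$ in (\ref{eq:tk})/(\ref{eq:downtime}) and with the buffer bound (\ref{eq:max_buffer}). The key observation that makes the induction routine is that the objective is monotone (non-increasing) in $\tilde t_K$ and that every constraint bounding the $\tilde t_k$ points in the same, favorable direction; the rest is bookkeeping.
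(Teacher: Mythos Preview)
Your argument is correct and follows the same route the paper sketches: the paper does not give a formal proof of this proposition, but the text immediately preceding it (together with Theorem~\ref{thm:equi}) performs exactly your first ``reparametrization'' step, and then simply asserts (\ref{eq:equi}) as an equivalent representation of (\ref{b_k}). Your second step, the explicit monotonicity/induction argument showing that relaxing the $\max$-recursion to the inequalities in (\ref{eq:equi}) does not change the optimum, is a detail the paper leaves implicit, so you are supplying strictly more rigor along the same line of reasoning.
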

%The above result shows that even though the constraint in (\ref{eq:robust_constraint}) is not equal to (\ref{eq:constraint}), yet in any optimal solution $\gamma_k$ must satisfy (\ref{eq:constraint}).

%{\em Outline of the Proof}: $U(\cdot)$ is strictly increasing function. Hence, at the optimal value we must have $\gamma_k$ satisfy constraint (\ref{eq:constraint}). \qed
%\vspace{-.1in}
\begin{remark}
The optimization problem $\mathcal{P}_{\mathrm{eq}}$ only gives $\gamma_k$. However, we need the rates $R_{i,k}$. We obtain that in the following manner: if $\gamma_k$ is the optimal solution of $\mathcal{P}_{\mathrm{eq}}$, then we can obtain the optimal rates as follows: $R_{i,k}=\gamma_k$ for all $i\in \mathcal{A}_{\alpha,k}$ and $R_{i,k}=R_0$ for all $i\in \mathcal{A}^{C}_{\alpha,k}$.
\end{remark}

\section{Algorithms and Results}\label{sec:result}
The problem in $\mathcal{P}_{\mathrm{eq}}$ is not a convex problem. In order to make it convex, we first relax the discrete strategy space. Subsequently, we state an algorithm (\textbf{360-ROBUST}) which computes a feasible solution from the optimal solution from the relaxed problem. Finally, we state the online version of the algorithm 360-ROBUST which adapts to the real time variation of the bandwidth prediction, and the FoV prediction. 
\subsection{Relaxation}
 %we make the following assumption.

%First, we assume that
%\vspace{-.1in}

%\feng{[[But I think Assumption 1 applies to the original problems $\mathcal{P}$ and  $\mathcal{P}_{\mathrm{eq}}$ as well, right? Move it earlier when you are describing $U(\cdot)$.]]}

%\feng{[[The description of $\mathcal{P}^{\mathrm{relaxed}}_{\mathrm{eq}}$ is the same as $\mathcal{P}_{\mathrm{eq}}$. Make sure you highlight that $\gamma_k$ is now continuous.]]}

The relaxed problem is stated in the following:
\begin{eqnarray}
\mathcal{P}^{\mathrm{relaxed}}_{\mathrm{eq}}: & \text{ maximize } (\ref{eq:robust})\nonumber\\%\text {maximize } &\sum_{k=1}^{K}U(\gamma_k)-\lambda(\tilde{t}_K-(K-1)L-t_{\mathrm{ini}})\nonumber\\
& \text{subject to } &  (\ref{eq:equi}), (\ref{eq:downtime}), (\ref{eq:max_buffer}),  R_{0}\leq \gamma_k\leq R_m\nonumber\\
%& & \gamma_k\leq R_{j,k} \forall j\in \mathcal{A}_{\alpha}\label{eq:robust_constraint}\\
& \text{var}: & \gamma_k\nonumber
\end{eqnarray}
Note that $\gamma_k$ is now continuous an can take any value within the interval $[R_{0},R_m]$ unlike the discrete values in 
$\mathcal{P}_{\mathrm{eq}}$. 

\begin{proposition}\label{prop:convexrobust}
$\mathcal{P}^{\mathrm{relaxed}}_{\mathrm{eq}}$  is a convex optimization problem.

The optimal value of the objective function in $\mathcal{P}^{\mathrm{relaxed}}_{\mathrm{eq}}$  is greater than or equal to the optimal value of the objective function in $\mathcal{P}_{\mathrm{eq}}$. 
\end{proposition}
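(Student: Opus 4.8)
The plan is to treat the two assertions in Proposition~\ref{prop:convexrobust} separately, since they are logically independent: the first is a structural claim -- that $\mathcal{P}^{\mathrm{relaxed}}_{\mathrm{eq}}$ maximizes a concave objective over a convex feasible region -- and the second is the routine observation that enlarging the feasible set cannot decrease the optimal value.

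For convexity, I would first check that the objective~(\ref{eq:robust}) is concave in the full tuple of variables $(\gamma_k,\tilde t_k,t_k,\zeta_k)_k$. The term $\sum_{k=1}^{K}U(\gamma_k)$ is concave because $U$ is concave by Assumption~1 and concavity is preserved under sums; the stall term $-\lambda(\tilde t_K-(K-1)L-t_{\mathrm{ini}})$ is affine, hence concave; and $-\eta\sum_{k=1}^{K-1}|\gamma_k-\gamma_{k+1}|$ is concave because each $|\gamma_k-\gamma_{k+1}|$ is convex in $(\gamma_k,\gamma_{k+1})$, a nonnegative combination of convex functions is convex, and its negation is concave. Thus the objective is a sum of concave functions and therefore concave. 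Next I would verify that the feasible region is convex: constraint~(\ref{eq:equi}) is a family of affine inequalities in $(\gamma_k,\tilde t_k)$; constraint~(\ref{eq:downtime}), after introducing $\zeta_k$ as an auxiliary variable with $\zeta_k\ge 0$, is equivalent to the affine inequality $t_k\ge t_{k-1}+L(|\mathcal{A}_{\alpha,k}|\gamma_k+|\mathcal{A}^{C}_{\alpha,k}|R_0)/C_k$ together with $\zeta_k\ge 0$; constraint~(\ref{eq:max_buffer}) is affine; and the relaxed box $R_0\le\gamma_k\le R_m$ is affine. An intersection of halfspaces and hyperplanes is convex, so $\mathcal{P}^{\mathrm{relaxed}}_{\mathrm{eq}}$ maximizes a concave function over a convex set, i.e.\ it is a convex program.

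The one place where the argument is not purely mechanical is justifying that replacing the $\max$ in the original play-time recursion~(\ref{b_k}) by the separated lower-bound inequalities in~(\ref{eq:equi}), and relaxing the rule that $\zeta_k$ is positive only when the buffer is full to the plain bound $\zeta_k\ge 0$, leaves the optimal value unchanged. This holds because $\tilde t_K$ appears in the objective only through the negative term $-\lambda\tilde t_K$, so at any optimum $\tilde t_K$ is pushed down to the largest of its lower bounds -- which is exactly the value produced by the $\max$ -- and likewise each $\zeta_k$ is driven to $0$ unless the buffer constraint~(\ref{eq:max_buffer}) forces otherwise. I would state this reduction explicitly; everything else in the convexity check is bookkeeping, so I expect this modeling step, rather than any computation, to be the only real obstacle.

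For the second claim, I would invoke the relaxation principle directly. Both $\mathcal{P}_{\mathrm{eq}}$ and $\mathcal{P}^{\mathrm{relaxed}}_{\mathrm{eq}}$ carry the identical objective~(\ref{eq:robust}) and the identical constraints~(\ref{eq:equi}),~(\ref{eq:downtime}),~(\ref{eq:max_buffer}); the only change is that the discrete requirement $\gamma_k\in\{R_0,\ldots,R_m\}$ is replaced by $\gamma_k\in[R_0,R_m]$, and $\{R_0,\ldots,R_m\}\subseteq[R_0,R_m]$ since $R_m>\cdots>R_1>R_0$. Hence every feasible point of $\mathcal{P}_{\mathrm{eq}}$ is feasible for $\mathcal{P}^{\mathrm{relaxed}}_{\mathrm{eq}}$ with the same objective value, so the maximum of the latter is at least the maximum of the former, which is exactly the stated inequality.
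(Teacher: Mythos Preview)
Your proposal is correct and follows the natural route; the paper itself does not give a formal proof of Proposition~\ref{prop:convexrobust}, offering only the one-sentence remark ``since the strategy space is relaxed, the optimal value will be greater than or equal to the optimal value of the exact problem'' for the second claim and nothing explicit for the first. Your concavity/convexity bookkeeping is exactly what is needed, and your relaxation argument for the second part is identical in spirit to the paper's remark.

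One small comment: the paragraph you flag as the ``one place where the argument is not purely mechanical''---replacing the $\max$ in~(\ref{b_k}) by the inequalities~(\ref{eq:equi}) and relaxing the buffer-triggered rule on $\zeta_k$---is really a justification of the passage from $\mathcal{P}$ to $\mathcal{P}_{\mathrm{eq}}$, which the paper already asserts in Proposition~1. For Proposition~\ref{prop:convexrobust} itself you may simply take the formulation of $\mathcal{P}_{\mathrm{eq}}$ with constraints~(\ref{eq:equi}),~(\ref{eq:downtime}),~(\ref{eq:max_buffer}) as given, observe that each is affine in $(\gamma_k,\tilde t_k,t_k,\zeta_k)$, and conclude convexity directly. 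Including your monotonicity argument for $\tilde t_K$ and $\zeta_k$ is not wrong, but it belongs logically to the earlier equivalence step rather than to this proposition.
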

Since the relaxed problem $\mathcal{P}^{\mathrm{relaxed}}_{\mathrm{eq}}$ is convex, we can solve it using the standard convex optimization techniques efficiently. However, we have to transform the optimal solution of $\mathcal{P}^{\mathrm{relaxed}}_{\mathrm{eq}}$ into a feasible solution as the optimal solution $R^{*}_{i,k}$ of $\mathcal{P}^{\mathrm{relaxed}}_{\mathrm{eq}}$  may not belong to the discrete set $\{R_0, . . . , R_m\}$. Since the strategy space is relaxed, the optimal value will be greater than or equal to the optimal value of the exact problem. The equality arises only when the solution of the relaxed problem is a feasible solution of the exact problem. In the following section, we provide heuristic solutions to obtain the feasible solutions from the optimal solution of the relaxed problem.
%\vspace{-.1in}
\subsection{Heuristic to obtain the feasible solution}
Let $\gamma^{*}_k$ be the optimal solution of the relaxed problem $\mathcal{P}^{\mathrm{relaxed}}_{\mathrm{eq}}$ for all $k$. $\gamma_{k}^{*}$ can be obtained using the convex optimization tools as the problem is convex (Proposition~\ref{prop:convexrobust}). Now, we describe a simple heuristic to obtain a feasible solution based on the optimal solution.
%\begin{algorithm}
\begin{figure}
	\vspace{-.1in}
\begin{algorithm}[H]
\small
\begin{algorithmic}
\STATE {\bf Inputs}: $\mathcal{A}_{\alpha,k}$, $C_k$, the optimal solution $\gamma^{*}_k$ of $\mathcal{P}^{\mathrm{relaxed}}_{\mathrm{eq}}$, and the set  $\{R_0,\ldots, R_m\}$.
\STATE {\bf Initialization:} $L_0=0$.
\FOR{$k=1,\ldots, K$}
%\FOR{$i=1,\ldots, N$}
\STATE $\gamma_k=\max_{j\in\{1,\ldots,m\}} \{R_{j}: R_{j}\leq \gamma^{*}_k\}$.
\STATE $\bar{\gamma}_{k}=\min_{j\in \{1,\ldots,m\}} \{R_{j}: R_{j}\geq \gamma^{*}_{k}\}$.
%\ENDFOR
\STATE Compute $L_k=L_{k-1}+(\gamma^{*}_{k}-\gamma_{k})$.
 %\STATE $\gamma_k=\min\{R_{i,k}:i\in \mathcal{A}_{\alpha,k}\}$.
\ENDFOR
\IF{$L_{K}<\bar{\gamma}_{k}-\gamma_k$}
\STATE Exit since it is the optimal solution.
\ENDIF
\FOR{$k=K, K-1, \ldots , 2$}
\IF{$L_k\geq\bar{\gamma}_{k}-\gamma_k$}

\STATE $L_{k-1}=L_{k}-(\bar{\gamma}_{k}-\gamma_k)$.
\STATE Update: $\gamma_k=\bar{\gamma}_{k}$ for all $i\in \mathcal{A}_{\alpha,k}$.%\sum_{i\in \mathcal{A}_{\alpha,k}}(\bar{R}_{i,k}-R_{i,k})$.
%\STATE Update $\gamma_k=\min\{R_{i,k}:i\in \mathcal{A}_{\alpha,k}\}$.
\ENDIF
\ENDFOR
\STATE {\bf OUTPUT} $\gamma_k$.
\end{algorithmic}
\caption{: Algorithm \textbf{360-ROBUST} provides a feasible solution of $\mathcal{P}_{\mathrm{eq}}$ from the optimal solution of $\mathcal{P}^{\mathrm{relaxed}}_{\mathrm{eq}}$}\label{algo2} 
\end{algorithm}
\vspace{-0.2in}
\end{figure}
%\begin{algorithmic}
% $L_0=0$\;
%\For{$k=1,\ldots, K$}{
%\For{$i=1,\ldots, N$}{
% $R_{i,k}=\max \{R_{j}: R_{j}\leq R^{*}_{i,k}\}$\;
%$\bar{R}_{i,k}=\min \{R_{j}: R_{j}\geq R^{*}_{i,k}\}$\;
%}
%Compute $L_k=L_{k-1}+(R^{*}_{i,k}-R_{i,k})$\;
%}
%\If{$L_{K}=0$} {exit, it is the optimal solution\;}
%\For{$k=1,\ldots,K$}{
%\If{$L_k-\sum_{i\in \mathcal{A}_{\alpha,k}}(\bar{R}_{i,k}-R_{i,k})>0$}{
%Update: $R_{i,k}=\bar{R}_{i,k}$ for all $i\in \mathcal{A}_{\alpha,k}$\;
%$L_{k+1}=L_{k}-\sum_{i\in \mathcal{A}_{\alpha,k}}(\bar{R}_{i,k}-R_{i,k})$\;}
%}
%%\end{algorithmic}
%\caption{Algorithm 2}\label{algo2}
%\end{algorithm}
%\textbf{Algorithm 2}:
%\begin{enumerate}
%\item $L_0=0$.
%\item For $k=1,\ldots, K$ and $i=1,\ldots, N$ do the following
%\item Set $R_{i,k}=\{R_{j}: R_{j}\leq R^{*}_{j,k}<R_{j+1}\}$,
%\item Set $\bar{R}_{i,k}=\{\min\{R_j,R_{max}\}:R_j>R_{i,k}, R_{j-1}=R_{i,k}\}$.
%\item Compute $L_k=L_{k-1}+\sum_{i}(R^{*}_{i,k}-R_{i,k})$.
%\item End For.
%\item If $L_{K}>0$, then go to the next step, otherwise exit since it is the optimal solution.
%\item For $k=1,\ldots, K$ do the following
%\begin{enumerate}
%\item If $L_{k}-\sum_{i\in \mathcal{A}_{\alpha,k}}(\bar{R}_{i,k}-R_{i,k})> 0$,
%\item Update: $R_{i,k}=\bar{R}_{i,k}$ for $i\in \mathcal{A}_{\alpha}$ and $L_{k+1}=L_{k}-\sum_{i\in \mathcal{A}%_{\alpha,k}}(\bar{R}_{i,k}-R_{i,k})$.
%\end{enumerate}
%\item End For.
%\end{enumerate}

As described in Algorithm~\textbf{360-ROBUST}, our high-level idea is to
%\feng{[[The biggest issue with the text below is that it just mechanically describes what the algorithm does, without providing much intuition. Why the residual bandwidth propagated backward instead of forward?]]}
 set $\gamma_k$ at the maximum possible rate which is less than or equal to the optimal solution $\gamma_k^{*}$ of the relaxed problem. Hence, $\gamma_k$ is obtained by simply down-quantizing the optimal rate obtained from the relaxed problem. The algorithm proceeds to compute the bandwidth saved compared to the solution of the relaxed problem because of the down-quantization ($L_k$).

%The algorithm then assigns the rates with which the tiles should be downloaded at $\bar{\gamma}_k$ as long as $L_{k}$ is non-positive starting from the last chunk. In other words, the algorithm tries to increase the rate for each chunk for the future chunks if it does not increase the stall compared to the solution of the relaxed problem.

 The algorithm saves any additional bandwidth for down-quantization to download some chunks at the immediate higher level if it does not increase the stall duration. Specifically, the algorithm fetches the chunk $K, K-1, \ldots$ at the immediate next higher level and so on until the stall time does not increase ($L_k$ remains positive).  
 
A reader may ask the question why not assigning the additional bandwidth to fetch the earlier chunk at  a higher rate rather fetching the later chunks at a higher rate. The reasons are the following. First, this will help reduce the number of switches of rate among two consecutive chunks since starting from last chunk, it is more likely to increase the quality of consecutive chunks. However, due to earlier deadlines of the earlier chunks, increasing quality of consecutive earlier chunks is much less likely.   
%: first, our algorithm up-quantizes the rate to the immediate next level starting from the last chunk. If the additional bandwidth is used to fetch the nearest possible chunk at a higher rate, the other chunks may be fetched a relatively lower rate because there will be no bandwidth to increase the rates of the next chunk. Hence, the number of switches of rate among two consecutive chunks is minimum if the residual bandwidth is used to fetch the chunks at the immediate higher rate starting from the last chunk. 
Second, there may be an error in estimating the bandwidth; if the estimated bandwidth is erroneous, there may not be any bandwidth to fetch the future chunks even in the lower rate, if the additional bandwidth is used up in downloading the current chunk at the immediate higher level.%We do the above for each chunk until we reach the last chunk.  Thus, the algorithm tries to fetch the higher quality tiles for chunk $k$, then proceeds to fetch higher quality ones for chunk $k+1$, and so on.  The algorithm tries to minimize any wasted bandwidth because of the down-quantization. Hence, it will achieve a higher play-back rate compared to a simple down-quantization.

%$L_{k}$, the bandwidth that is saved because of the down-quantization of the rates. Then the algorithm tries to fetch the tiles within the set $\mathcal{A}_{\alpha,k}$ at one level higher (assigning to $\bar{\gamma}_{k}$)  provided that the stall time does not increase compared to the optimal solution of the relaxed problem ($L_{k}$ is not negative).   Note that  %The FoV can be more accurately predicted for chunks in the near future. The chunks in the near future have tighter deadlines, hence, we may see low quality chunks at the beginning, but very high quality ones in the future. %Thus. there mayThus, there may be smaller tiles in the set of $A_{\alpha,k}$ for the chunk $k$ in the near future compared to the distant future. Hence, smaller amount of bandwidth may be required for the chunks of near future.

\begin{theorem}
The algorithm 360-ROBUST gives a feasible solution of the problem $\mathcal{P}_{\mathrm{eq}}$. The stall time does not increase compared to the optimal solution of the relaxed problem.
\end{theorem}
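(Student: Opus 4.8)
The statement has two halves; the plan is to dispatch feasibility quickly and concentrate on the stall bound. For feasibility of $\mathcal{P}_{\mathrm{eq}}$, every value the algorithm ever assigns to $\gamma_k$ is either $\max\{R_j:R_j\le\gamma_k^{*}\}$ or $\bar\gamma_k=\min\{R_j:R_j\ge\gamma_k^{*}\}$, both of which lie in $\{R_0,\dots,R_m\}$ since $\gamma^{*}$ is feasible for $\mathcal{P}^{\mathrm{relaxed}}_{\mathrm{eq}}$ and hence $R_0\le\gamma_k^{*}\le R_m$; so the discreteness constraint holds. The constraints (\ref{eq:equi}), (\ref{eq:downtime}), (\ref{eq:max_buffer}) only involve the auxiliary variables $t_k,\tilde t_k,\zeta_k$, which for any fixed rate profile are produced by the forward recursion — pick $\zeta_k\ge 0$ minimal so that (\ref{eq:max_buffer}) holds, then set $t_k$ from (\ref{eq:downtime}) and $\tilde t_k$ from (\ref{b_k}); since $\mathcal{P}_{\mathrm{eq}}$ imposes no upper bound on $\tilde t_K$ (the stall enters only the objective), this completion is always feasible, so the output of \textbf{360-ROBUST} together with it is a feasible point.

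The substantive part is the stall bound, and the plan is to reduce it to a prefix-domination inequality for the download times. Write $d_j=L(|\mathcal{A}_{\alpha,j}|\gamma_j+|\mathcal{A}^{C}_{\alpha,j}|R_0)/C_j$ and work in the regime the paper isolates, where the buffer cap is slack so $\zeta_k\equiv 0$. Unrolling (\ref{b_k}) gives
\begin{align}\label{eq:unroll}
\tilde t_K=\max\Big\{\,t_{\mathrm{ini}}+(K-1)L,\ \max_{2\le k\le K}\Big(\textstyle\sum_{j=1}^{k}d_j+(K-k)L\Big)\Big\},
\end{align}
so $\tilde t_K$ is a maximum over prefixes of a prefix sum of the $d_j$ plus a rate-independent term; hence, if for every $k$ the prefix sum $\sum_{j\le k}d_j$ is no larger under the algorithm's profile $\gamma^{\mathrm{alg}}$ than under the relaxed optimum $\gamma^{*}$, then $\tilde t_K^{\mathrm{alg}}\le\tilde t_K^{*}$, and since the stall time equals $\tilde t_K-(K-1)L-t_{\mathrm{ini}}$ it does not increase. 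Since $d_j^{\mathrm{alg}}-d_j^{*}=(L|\mathcal{A}_{\alpha,j}|/C_j)(\gamma_j^{\mathrm{alg}}-\gamma_j^{*})$ and the algorithm's pool $L_k$ accumulates the unweighted differences $\gamma_j^{*}-\gamma_j$, it then suffices — treating, as the bookkeeping implicitly does, the per-chunk weights $|\mathcal{A}_{\alpha,j}|/C_j$ as (nearly) constant, or else carrying the weighted pool instead — to establish
\begin{align}\label{eq:prefix}
\sum_{j=1}^{k}\big(\gamma_j^{\mathrm{alg}}-\gamma_j^{*}\big)\le 0\qquad\text{for every }k=1,\dots,K.
\end{align}

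To prove (\ref{eq:prefix}), let $B\subseteq\{2,\dots,K\}$ be the set of chunks the backward loop bumps, $\delta_j=\gamma_j^{*}-\gamma_j\ge 0$ the down-quantization saving, and $u_j=\bar\gamma_j-\gamma_j\ge\delta_j$ the gap to the next layer; then $\gamma_j^{\mathrm{alg}}-\gamma_j^{*}$ equals $-\delta_j$ for $j\notin B$ and $u_j-\delta_j$ for $j\in B$, so (\ref{eq:prefix}) is equivalent to $\sum_{j\le k,\,j\in B}u_j\le\sum_{j\le k}\delta_j$ for every $k$, the right-hand side being exactly $L_k$ after the forward pass. If the pre-check fires, $B=\emptyset$ and this is immediate. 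Otherwise I would maintain, through the backward sweep, the invariant that $L_k\ge 0$ at all times — a bump is taken only when $L_k\ge u_k$, so it leaves $L_{k-1}=L_k-u_k\ge 0$, and a skipped bump does not decrease $L$ — together with the sharper statement that the pool on reaching chunk $k$ never exceeds $\sum_{j\le k}\delta_j$ less the cost of bumps already committed at chunks $>k$. Decomposing the sweep into maximal runs of consecutive bumps and telescoping the pool updates within a run then bounds, for every $k$, the bump cost incurred in the run that contains $k$ by the down-quantization savings up to $k$; combining this with the invariant across runs yields $\sum_{j\le k,\,j\in B}u_j\le\sum_{j\le k}\delta_j$, hence (\ref{eq:prefix}), hence $\tilde t_K^{\mathrm{alg}}\le\tilde t_K^{*}$ via (\ref{eq:unroll}), hence the stall claim.

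I expect the invariant for (\ref{eq:prefix}) to be the main obstacle: the backward pass spends a single \emph{global} pool, whereas feasibility requires control of \emph{every} prefix, and the two accountings live on different index ranges. The delicate case is a \emph{skipped} bump — a chunk $k$ with $L_k<u_k$ — where one must verify that the pool passed down to chunk $k-1$ remains consistent with $\sum_{j\le k-1,\,j\in B}u_j\le\sum_{j\le k-1}\delta_j$ and that no budget is spuriously recreated; if the literal pool update is loose there, I would replace it by a direct potential-function/epoch argument that certifies the prefix inequality. A secondary point to spell out is the passage from (\ref{b_k}) to (\ref{eq:unroll}), which uses $\zeta_k\equiv 0$ and the monotonicity of the $\max$-recursion in the prefix sums; if $\zeta_k>0$ is allowed, one also checks that the algorithm's (smaller) download amounts do not force larger waiting times.
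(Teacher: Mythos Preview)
The paper states this theorem without proof, so there is nothing to compare against; on its own merits your plan has a genuine gap, and it is precisely the case you flagged but hoped to patch. The prefix inequality $\sum_{j\le k}(\gamma_j^{\mathrm{alg}}-\gamma_j^{*})\le 0$ for all $k$ is \emph{false} for \textbf{360-ROBUST} as written, because when the backward pass skips chunk $k$ it leaves $L_{k-1}$ untouched at its \emph{forward-pass} value $\sum_{j\le k-1}\delta_j$ --- thereby forgetting all budget already spent on bumps at chunks larger than $k$. No potential-function or epoch argument can certify the prefix bound: budget is genuinely re-created after a skip, so the inequality you need simply does not hold.

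Concretely: take $K=5$, rate levels $\{0,2,7,8,14\}$, and $\gamma^{*}=(13,\,2.1,\,8.1,\,0.1,\,0.1)$, so $\delta=(5,0.1,0.1,0.1,0.1)$ and $u=(6,5,6,2,2)$. The forward pass gives $L=(5,5.1,5.2,5.3,5.4)$; the backward pass bumps $k=5,4$ (pool $5.4\to3.4\to1.4$), skips $k=3$ since $1.4<6$, then reads the \emph{stale} $L_2=5.1\ge5$ and bumps $k=2$. Hence $B=\{2,4,5\}$, $\gamma^{\mathrm{alg}}=(8,7,8,2,2)$, and $\sum_k\gamma_k^{\mathrm{alg}}=27>23.4=\sum_k\gamma_k^{*}$: the algorithm schedules strictly more total download than the relaxed input. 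Your prefix test already fails at $k=4$ (namely $u_2+u_4=7>5.3$), and with unit per-chunk weights the unrolled stall recursion yields $\tilde t_5^{\mathrm{alg}}=27>25.2=\tilde t_5^{*}$, so the stall strictly increases. Your instinct about the skipped-bump case was exactly right --- but it is a counterexample, not a delicate sub-case to be handled; the stall claim does not follow from the pseudocode as printed, and any valid argument would have to either amend the backward update or invoke structural hypotheses on $\gamma^{*}$ that neither you nor the paper supply.
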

Note that if the quantization levels are very close, then the feasible solution obtained will be very close to the optimal solution. Also note that Algorithm 360-ROBUST scales linearly with the number of chunks. %If the quantization levels are minute, the algorithm closely matches the optimal solution. Also note that the stall time remains the same as in the relaxed problem.
%\vspace{-.1in}
We, now, compute the optimality gap of Algorithm 360-ROBUST.
\begin{theorem}\label{thm:opt_gap1}
Suppose the optimal value of the objective function (QoE) in $\mathcal{P}^{\mathrm{relaxed}}_{\mathrm{eq}}$ is $\mathrm{OPT}$.
The value of the objective  function attained by Algorithm 360-ROBUST is at least
\begin{align}
\mathrm{OPT}-\max_{j\in {0,\ldots,m-1}}K(U(R_{j}+\dfrac{R_{j+1}-R_{j}}{K})-U(R_{j})).
\end{align}
\end{theorem}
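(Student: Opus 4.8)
The plan is to bound, chunk by chunk, the loss incurred in the QoE objective \eqref{eq:robust} when we pass from the continuous optimum $\gamma^{*}_k$ of $\mathcal{P}^{\mathrm{relaxed}}_{\mathrm{eq}}$ to the feasible integral solution $\gamma_k$ produced by Algorithm 360-ROBUST, and then argue that the rate-variation and stall terms do not contribute any extra loss. First I would observe that the down-quantization step sets $\gamma_k=\max\{R_j:R_j\le\gamma^{*}_k\}$, so $0\le\gamma^{*}_k-\gamma_k< R_{j+1}-R_j$ for the appropriate layer index $j=j(k)$; the subsequent up-rounding pass on the later chunks only replaces some $\gamma_k$ by $\bar\gamma_k$, which brings $\gamma_k$ closer to $\gamma^{*}_k$, so after the whole algorithm we still have $0\le|\gamma^{*}_k-\gamma_k|< R_{j(k)+1}-R_{j(k)}$ for every $k$ (with equality handled by the exit condition). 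Combined with Theorem~4 (360-ROBUST returns a feasible solution whose stall time does not exceed that of the relaxed optimum), the stall penalty $-\lambda(\tilde t_K-(K-1)L-t_{\mathrm{ini}})$ in the value attained by the algorithm is no smaller than in $\mathrm{OPT}$, so that term can only help us.

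Next I would handle the utility term. Since $U$ is concave and strictly increasing (Assumption~1), for each $k$ we have $U(\gamma^{*}_k)-U(\gamma_k)\le U'(\gamma_k)(\gamma^{*}_k-\gamma_k)$, but a cleaner route avoiding derivatives is to use concavity directly: on the interval $[R_{j},R_{j+1}]$ the chord lies below $U$, and the largest possible gap $U(\gamma^{*})-U(\gamma)$ with $\gamma=R_j$ and $\gamma^{*}\le R_j+(R_{j+1}-R_j)$ is controlled. The key quantitative step is to note that $\gamma^{*}_k-\gamma_k$, summed over $k$, is exactly the quantity $L_K$ tracked by the algorithm, and $L_K< R_{j+1}-R_j$ for the relevant layer at termination (this is the exit/update invariant maintained in the backward pass). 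Hence the \emph{total} over-shoot $\sum_k(\gamma^{*}_k-\gamma_k)$ that ``remains'' is at most one quantization step, but to get the stated bound I would instead bound each term $U(\gamma^{*}_k)-U(\gamma_k)$ by $U(R_{j(k)}+\tfrac{R_{j(k)+1}-R_{j(k)}}{K})-U(R_{j(k)})$ after redistributing: because the algorithm equalizes the residuals, one may assume WLOG each residual $\gamma^{*}_k-\gamma_k$ is at most $\tfrac{R_{j+1}-R_j}{K}$ (the total residual $<R_{j+1}-R_j$ spread over $K$ chunks), and then concavity of $U$ makes the per-chunk gap largest when evaluated at the lowest base point, giving $U(R_j+\tfrac{R_{j+1}-R_j}{K})-U(R_j)$ per chunk; summing over the $K$ chunks and maximizing over the layer index $j\in\{0,\dots,m-1\}$ yields the claimed additive loss $K\big(U(R_j+\tfrac{R_{j+1}-R_j}{K})-U(R_j)\big)$.

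Finally I would dispatch the rate-variation term $-\eta\sum_{k}|\gamma_k-\gamma_{k+1}|$: since the backward pass raises consecutive later chunks toward $\bar\gamma_k$ precisely to keep neighbors aligned, I would argue that $\sum_k|\gamma_k-\gamma_{k+1}|$ for the algorithm's output is no larger than $\sum_k|\gamma^{*}_k-\gamma^{*}_{k+1}|$ for the relaxed optimum (quantization to a common grid does not increase total variation when done monotonically), so this term too can only improve the attained value relative to $\mathrm{OPT}$. Putting the three pieces together: attained value $\ge \mathrm{OPT} - \max_j K\big(U(R_j+\tfrac{R_{j+1}-R_j}{K})-U(R_j)\big)$.

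The main obstacle I anticipate is making the ``WLOG each residual $\le\tfrac{R_{j+1}-R_j}{K}$'' step fully rigorous: the residuals $\gamma^{*}_k-\gamma_k$ are not actually equal, and the layer index $j(k)$ may differ across chunks, so one must either (i) invoke concavity of $U$ to show that, for a fixed total residual budget, the sum $\sum_k\big(U(\gamma^{*}_k)-U(\gamma_k)\big)$ is maximized when the residuals are equal and all base points are pushed to the single worst layer — a convexity-of-the-gap-function argument — or (ii) bound $\sum_k\big(U(\gamma^{*}_k)-U(\gamma_k)\big)\le\sum_k U'(R_0)(\gamma^{*}_k-\gamma_k)\le U'(R_0)(R_{j+1}-R_j)$ and then compare with the stated expression via concavity; reconciling the exact form of the bound in the theorem with whichever argument is cleanest is the delicate part. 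Everything else (feasibility, the stall monotonicity, the total-variation monotonicity) is inherited from Theorem~4 and the explicit invariants of the algorithm.
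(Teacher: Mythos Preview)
Your argument for the utility term is essentially the paper's: bound the total residual $\sum_k(\gamma^{*}_k-\gamma_k)$ by a single quantization step $\max_j(R_{j+1}-R_j)$ via the backward-pass invariant, then use concavity of $U$ (Jensen) to spread that residual evenly over the $K$ chunks. The paper writes this as $\sum_k[U(R^{*}_{i,k})-U(R_{i,k})]\le K[U(R_j+(R_{j+1}-R_j)/K)-U(R_j)]$ directly from concavity, without lingering on the obstacle you flag (the layer index $j(k)$ varying across $k$); your route~(i) --- push every base point to the worst layer, then apply Jensen to the concave map $\delta\mapsto U(R_j+\delta)-U(R_j)$ --- is precisely what makes that step rigorous, and is what the paper is implicitly doing.

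Where you go beyond the paper: the paper's proof bounds \emph{only} the utility term and is silent on the stall and rate-variation penalties. Your use of Theorem~4 for the stall term is correct and closes a gap the paper leaves open. Your treatment of the variation term, however, has a real error: the claim that down-quantization to a common grid does not increase total variation is false in general (take $\gamma^{*}=(0.6,0.4)$ on the grid $\{0,0.5,1\}$; down-quantization yields $(0.5,0)$ with $|0.5-0|>|0.6-0.4|$), and the backward up-rounding pass does not obviously repair this. The paper does not address the variation term at all, so strictly speaking neither argument establishes the stated bound for the full QoE; both are complete only for the utility component.
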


\begin{proof}
	%Here, we prove Theorem 3. 
	Note that if there are enough extra bandwidth the algorithm fetches the tiles in the next higher rate. Suppose $R_{i,k}^{*}$ be the optimal rate for tile $i$ given by the optimal solution of the relaxed problem $\mathcal{P}^{\mathrm{relaxed}}_{\mathrm{eq}}$ for chunk $k$. Suppose that $R_{i,k}$ be the rate given by simple down-quantization where $R_{i,k}$ is defined in the following
	\begin{align}
	R_{i,k}=\max_{j\in\{0,\ldots,m\}}R_{j}\leq R_{i,k}
	\end{align}
	Since we are down-quantizing, thus, the QoE bound is off by at most
	\begin{align}
	\min_{i\in \mathcal{A}_{\alpha,k}}U(R^{*}_{i,k})-U(R_{i,k})
	\end{align}%If there is extra bandwidth the algorithm then fetches the tiles in the next hig
	
	Now, note that since the extra bandwidth is used to fetch the tiles in the next higher rate, if $R_{i+1,K}-R_{i,K}\leq L_{K} $ then the algorithm fetches the tiles in the next higher rate $R_{i+1,K}$ for chunk numbered $K$. $L_{K-1}=L_{K}-R_{i+1,K}+R_{i,K}$, if $L_{K-1}\geq R_{i+1,K-1}-R_{i,K}$, then the algorithm fetches the tiles at the next higher rate for chunk numbered $K-1$. 
	
	Thus, our proposed algorithm fetches total rate which is off from the optimal one by at most the following amount:
	\begin{align}
	\sum_{k=1}^{K}R_{i,k}^{*}-R_{i,k}\leq \max_{j\in \{0,\ldots,m-1\}}(R_{j+1}-R_{j}).
	\end{align}
	
	Since $U(\cdot)$ is a concave function, thus, $KU(\sum_{i=1}^{K}x_i/K)\geq\sum_{i=1}^{K}U(x_i)$. Now, the total amount of rate downloaded using our proposed algorithm is upper bounded by at most
	\begin{align}
	&\sum_{k=1}^{K}U(R^{*}_{i,k})-U(R_{i,k})\nonumber\\
	& \leq K[U(R_{i,k}+\sum_{k=1}^{K}(R^{*}_{i,k}-R_{i,k})/K)-U(R_{i,k})]\nonumber\\
	& \leq \max_{j\in \{0,\ldots,m-1\}}K [U(R_j+(R_{j+1}-R_{j})/K)-U(R_j)]
	\end{align}
	Hence, the result follows.
\end{proof}

%\feng{[[Provide some intuitions of the above theorem.]]}

The above theorem provides the bound on the QoE gap attained by the Algorithm 360-ROBUST as compared to the optimal value.  Note from the second part of Proposition~\ref{prop:convexrobust} that the optimal value of the relaxed version $\mathcal{P}^{\mathrm{relaxed}}_{\mathrm{eq}}$ is greater than or equal to the optimal value of the exact  one $\mathcal{P}_{\mathrm{eq}}$. Thus, the difference between the  value of the objective function provided by Algorithm 360-ROBUST and that of the optimal value of the exact problem $\mathcal{P}_{\mathrm{eq}}$ is {\em at most} 
\begin{align}
\max_{j\in {0,\ldots,m-1}}K(U(R_{j}+\dfrac{R_{j+1}-R_{j}}{K})-U(R_{j})).
\end{align}

%The proof is deferred to  the supplementary paper owing to the space constraint.
 The above also shows that the optimal QoE gap is small when the quantization gaps are small, {\em i.e.}, the difference between $R_{j+1}$ and $R_{j}$ are small for all $j\in 0,\ldots, m-1$.  Note that if the utility function is linear, the optimality gap is independent of the number of chunks. Specifically:
\begin{corollary}
If $U(\cdot)$ is linear ($U(x)=x$), the value attained by Algorithm 360-ROBUST is at least
\begin{align}
\mathrm{OPT}-\max_{j\in {0,\ldots,m-1}}(R_{j+1}-R_{j})
\end{align}
\end{corollary}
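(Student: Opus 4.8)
The plan is to specialize Theorem~\ref{thm:opt_gap1} to the case $U(x)=x$ and simplify the resulting penalty term. Theorem~\ref{thm:opt_gap1} already establishes that the objective value attained by Algorithm 360-ROBUST is at least
\begin{align}
\mathrm{OPT}-\max_{j\in\{0,\ldots,m-1\}}K\left(U\!\left(R_{j}+\frac{R_{j+1}-R_{j}}{K}\right)-U(R_{j})\right),\nonumber
\end{align}
so the only remaining task is to evaluate this bound under the linear utility.

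First I would substitute $U(x)=x$ into the inner expression. For each fixed $j$ it collapses to
\begin{align}
K\left(\left(R_{j}+\frac{R_{j+1}-R_{j}}{K}\right)-R_{j}\right)=K\cdot\frac{R_{j+1}-R_{j}}{K}=R_{j+1}-R_{j},\nonumber
\end{align}
which is manifestly independent of $K$. Taking the maximum over $j\in\{0,\ldots,m-1\}$ yields $\max_{j}(R_{j+1}-R_{j})$, and plugging this back into the bound above gives the claimed lower bound $\mathrm{OPT}-\max_{j\in\{0,\ldots,m-1\}}(R_{j+1}-R_{j})$, completing the argument.

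Since this is a direct corollary of Theorem~\ref{thm:opt_gap1}, I do not anticipate a genuine obstacle; the only point worth a sentence is that the argument $R_{j}+(R_{j+1}-R_{j})/K$ is a convex combination of $R_{j}$ and $R_{j+1}$, hence lies in $[R_{0},R_{m}]$, so the evaluation of $U$ is legitimate — but this is already guaranteed inside the proof of Theorem~\ref{thm:opt_gap1}. As an alternative, one could re-derive the statement directly: with a linear utility the aggregate QoE loss equals the total rate deficit $\sum_{k}(R^{*}_{i,k}-R_{i,k})$, which the proof of Theorem~\ref{thm:opt_gap1} bounds by $\max_{j}(R_{j+1}-R_{j})$, giving the same conclusion without ever invoking concavity. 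I would present the short specialization argument as the main proof and mention the direct derivation as a remark.
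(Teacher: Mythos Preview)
Your proposal is correct and matches the paper's approach: the paper states the result as an immediate corollary of Theorem~\ref{thm:opt_gap1} without a separate proof, and your specialization of that bound to $U(x)=x$ is exactly the intended one-line derivation.
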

%\fengout{Hence, Algorithm \textbf{360-ROBUST} yields a solution which is more likely to be bounded by the maximum difference of the two immediate quantization level of the rate. However, if the utility function is non-linear it is a function of the number of chunks.}

%\feng{[[Given the above gap, so what? Do you evaluate how large it is in experiments? Give some forward pointer.]]}

\subsection{Online Algorithm}
Based on the offline algorithm, we now propose the online algorithm which will adapt to the variation of the bandwidth and FoV dynamically, both of which cannot be perfectly predicted, at runtime. In the online version, the algorithm will decide the rate at which tiles will be downloaded for the immediate next chunk. We consider Receding Horizon Control (RHC) \cite{adam} or {\em sliding window} control type algorithm with a forward-looking window size of $W$ chunks.

Specifically, when the tiles of the $c$-th chunk are being fetched, the online algorithm calculates the download rates $\gamma_k$ for chunks $k=c+1, \ldots,c+W$ by solving the following optimization problem:
\begin{eqnarray}
\mathcal{P}^{\mathrm{online}}_{\mathrm{eq}}: & \text {maximize } &\sum_{k=c+1}^{c+W}U(\gamma_k)-\nonumber\\ & &\lambda(\tilde{t}_{W+c}-(W+c-1)L-t_{\mathrm{c}})\nonumber\\ & &-\eta\sum_{k=c+1}^{c+W}|\gamma_{k}-\gamma_{k-1}|\nonumber\\
& \text{subject to } &  (\ref{eq:equi}), (\ref{eq:downtime}), (\ref{eq:max_buffer})\nonumber\\
%& & \gamma_k\leq R_{j,k} \forall j\in \mathcal{A}_{\alpha}\label{eq:robust_constraint}\\
& \text{var}: & \gamma_k\in \{R_0,\ldots, R_{m}\}\nonumber
\end{eqnarray}

%\feng{[[Comment on the objective function. How different is it from the offline version besides only looking at next $W$ chunks?]]}

The time horizon is limited we only want to obtain an optimal solution for $W$ chunks ahead. In $\mathcal{P}^{\mathrm{online}}_{\mathrm{eq}}$, $\gamma_{c}$ is the minimum rate among the tiles of the set $A_{\alpha,c}$ to be downloaded. $t_{c}$ is the time at which the $c$-th chunk will stop downloading. The $c+1$-th chunk can only start downloading at time $t_{c+1}$.
The algorithm runs after each chunk download to determine the quality for the next chunk.
%~\cite{mpc}
The receding horizon control has advantages compared to the fixed horizon control, as it can adapt to the changes dynamically after each chunk. %Since this is a sliding window protocol, we optimize after each chunk has been downloaded with new prediction of the FoV and the bandwidth. %Note that the width of chunk is small (in the order of few seconds), hence the optimization problem $\mathcal{P}^{\mathrm{online}}_{\mathrm{eq}}$ needs to be solved very frequently.   %Hence, it will take at least $L+W_k$ time to play the $k$-th chunk. The expectation is taken over the new distribution depending on the head movements.

\textbf{The algorithm}: We, now, describe the algorithm. Similar to the offline scenario, we first relax the integer constraint to obtain the convex relaxation version of the problem. Subsequently, we obtain the optimal solution. Finally, we employ Algorithm 360-ROBUST to obtain the feasible solution. We only have to employ the algorithm for $c+W$  chunks starting from chunk $c+1$ instead of running the algorithm from $k=1$ to $K$.

The above online algorithm has sub-linear regret and sub-linear competitive bound \cite{adam}. In the following, we discuss how we predict the bandwidth and the FoV in an online manner. %how to adapt dynamically   in order to achieve better results in face of the bandwidth prediction error and the FoV prediction error.
%It is easy to discern that the Algorithms ~\ref{algo2}, \ref{algo3}, and \ref{algo4} can be easily adapted to the online version with $W+c$ in place of $K$. Because of the low complexity nature of the Algorithms~\ref{algo2}, \ref{algo3}, and \ref{algo4}, they can be easily adapted for solving the online version repeatedly. Recall that Algorithm~ \ref{algo2} is obtained from solving the relaxed version of the problem by relaxing the discrete strategy space.

\textbf{Bandwidth Prediction Error}:  We download all the tiles of the first $\eta$ chunks at the base layer $R_0$ in order to estimate the bandwidth.  The bandwidth for the future after time $t$ is predicted as the harmonic mean of the observed values for $n$ number of samples immediately preceding  time $t$. A similar approach has been used in previous non-360 video streaming systems \cite{yin,jiang}. Note that the estimated bandwidth is assumed to be constant through out the window $W$. In the simulation, we set $\eta$ at $2$, and $n$ at $200$. %This will help to build the buffer and then we can apply our heuristic algorithm. % when the buffer has at least $\eta$ chunks to obtain the optimal download rates of the tiles in a chunk.

Note that once the chunk $k$ is being downloaded,  we do not change its rate it in the online approach. However, the bandwidth may be very bad while downloading the chunk $k$ which may result into the stall. We can minimize it by getting all the tiles only in the base layers in the AVC\footnote{If we are using the SVC, then we can ignore the enhancement layers and stop downloading all the enhancement layers.}. On the other hand, if the bandwidth is high compared to the estimated value, we do not increase the download rate, rather we keep downloading the $k$th chunk. Running algorithm to determine fetching for the chunks after $k$ can lead to an increase in the download rates of the future chunks.

%We then download the tiles of future chunks at a higher rate.

\textbf{FoV prediction error}: In the optimization problem we try to fetch the tiles which have high probability to be part of the FoV. Hence, the user may watch the same quality video. Thus, our algorithm is robust in providing the same play-back rate with a high probability. Thus, the impact of the FoV prediction error will be low. 
%\feng{[[The above argument doesn't make sense. If the FoV prediction is inaccurate, how can you fetch in-FoV tiles with high probability? I think you should cite existing papers showing that FoV prediction can be done at a reasonable accuracy.]]}
%
%\feng{[[The text below is very unclear. Need to rewrite it completely.]]}
However, a user's FoV may not consist of tiles which are not in the part of $\mathcal{A}_{\alpha,c}$ for the current chunk $c$, {\em the user can still watch the video in the base layer. } Hence, there will be no stall or ``black screen". 

Note that we compute an empirical distribution over the set of FoV from the past-user's experience.  We also consider the current user's head position in order to compute the distribution of the FoV. Specifically, we set an weight $x_k$ ($0\leq x_k\leq 1)$ on the current FoV (FoV for chunk $k-1$) for chunk $k$ and  put an weight $1-x_k$ on the empirical distribution of the FoV for chunk $k$. Since we can not rely on the current head position for too long, we, thus, reduce the weight on the current FoV for future chunks.  Specifically, in the online version, for chunk  $k=c+1,\ldots, c+W$,  we set $x_{k+1}=x_{k}/(k-c-1)$. The set $\mathcal{A}_{\alpha,k}$ is then updated for chunks $k=c+1,\ldots, c+W$ accordingly. Using the updated distribution, we obtain the download rate using Algorithm 360-ROBUST.

\section{Trace-Driven Emulation Results}\label{sec:simulation}
%We, now, evaluate the strengths of our proposed algorithms in a simulated setting.
%\vspace{-.1in}

In this section, we empirically evaluate the performance of our algorithm based on the trace based emulation. We, first, explain how the bandwidth and FoV traces are collected and used to estimate in online manner. Subsequently, we explain our emulation set up. Finally, we compare our algorithms compared to two baseline algorithms.  %\feng{[[Better to have a pilot paragraph talking about the high-level approach for evaluation. Highlight it's emulation instead of simulation.]]}

\subsection{Bandwidth and FoV Traces }

\textbf{Bandwidth Traces}:  For bandwidth traces, we used a dataset from \cite{trace1}, which consists of continuous 1ms measurement of video streaming throughput of a moving device.  We use 40 traces which are at least 240 seconds long.
%\feng{[[The traces are for 3G. Can we use some newer traces such as this one: http://users.ugent.be/~jvdrhoof/dataset-4g/]]}

\textbf{FoV traces}: For FoV traces, we use a setting similar to the one described in \cite{feng}. Specifically, the trace consists of the head movement data of 50 users across the university.
Each user wears a Google Cardboard viewer \cite{cardboard} with
a Samsung Galaxy S5 smartphone placed into the back of
it. The smartphone plays four short YouTube 360 videos
(duration from 1min 40 sec to 3 min 26sec) of different genres. Meanwhile,
a head tracker app runs in the background and sends
the raw, yaw, pitch, and roll readings to a nearby laptop using
UDP over WiFi (latency less than 1ms). During the playback, the users can view at any
direction by freely moving their heads. We first compute the empirical distribution of the user's head movements by considering the data of 40  users. We then use the computed  empirical distribution as the distribution of the FoV for the remaining 10 users to construct the set $\mathcal{A}_{\alpha,k}$ for all $k$. % for the computing the optimal video quality for the sixth user using the algorithms that we describe.

    %In Section~\ref{sec:add_sim}, we will consider the impact of the FoV variance on the QoE attained by our proposed algorithms.  %At the $c$-th frame, we consider $\alpha$ as $\alpha+5*(k-c+1)/120$. %If the Note that the future prediction will be more erroneous, hence, the has higher cardinalities for future chunks rather than the current ones.

% In our simulation, the predicted bandwidth is computed by multiplying the actual value in the bandwidth trace by $1+e$ where $e$ is uniformly drawn from $[-p,p]$ where $0\leq p< 1$. Higher $p$ indicates higher variability. %We study the impact o$p$.
%\vspace{-0.1in}
%\subsection{Testbed Setup}
%In order to emulate the proposed approach, we  spatially segment the 360 video content into smaller
%portions called tiles. 

We take a 4 minute video, with a chunk duration of 2 seconds. The total number of chunks ($K$) is $120$.  The tile configuration for each chunk is considered to be $4\times 8$; each chunk consists of 32 tiles.   We considered an equi-rectangular projected video chunks. %\feng{[[What is a segment? Pay attention to the terms you use.]]}
Equi-rectangular projection is the most popular one because of its simplicity, and is also used by YouTube.The FoV is considered to be  120 degrees in the horizontal direction and 120 degree in the vertical direction \cite{feng}.% Since the FoV is considered to be the center of the video frame, in our case the FoV consists of a rectangle of $2\times 3$ tiles. The  \feng{[[Again, the FoV and its covered tiles may change depending on the viewing direction.]]}

The different tiles are encoded independently in different rates such that each chunk's rate is $x$ Mbps. We consider that the encoded rates belong to  $ \{8,16,24,32\}$. Since there are 32 tiles, each tile's average rate is $x/32$ Mbps which belongs to the  $\{0.25, 0.5, 0.75 ,1\}$. We transmit the video from the server, which is streaming to the client. 
%The video server
%runs on a Dell Power Edge R420 with a 6-core Intel E5-
%2620v3 CPU, 12 GB of RAM, and Ubuntu Server 16.0 LTS
%OS. The client is a Macbook Pro with 2.3GHz dual-core Intel Core i5, and 8GB memory. The server and client communicate with
%each other using the default TCP
%variant, TCP CUBIC \cite{ha2008cubic}. The client runs the bandwidth and FoV prediction to determine the qualities of the different tiles for the next chunk. This decision is sent to the server, and the server transmits the chunk with the tiles encoded at the requested quality levels. The emulation ends when
%all chunks are received. We record the received time of the chunk, and the
%quality  for each tile of the chunk.
%%
%To introduce bandwidth variations into the emulated experiments, we adopt
%Dummynet \cite{rizzo1997dummynet} running on the client side, thus the incoming
%rates of the TCP packets from the server  will be throttled according to the bandwidth trace used. Further, given the received qualities, the desired quality metrics are computed on the test FoV traces.

%the test FoV traces are used to compute the desired quality metrics. \feng{[[I don't understand the last sentence.]]}

In order to make the decision for the chunk quality, we employ the online algorithm that we have introduced in Section 4.3. %\feng{[[Refer to it using the section number.]]}
We run the online algorithm after the completion of the download of the each chunk for $W$ chunks ahead. Recall that $W$ is the size of the sliding window. Though our algorithm can update $W$ independently for each user, we fix  $W$  at the same value for each user. Unless explicitly mentioned, we assume that $W=5$. %We show the impact of the size of $W$. Except in Fig.~\ref{fig:window}, $W$ is set at $5$, as it obtains the optimal value when $W=5$.
  The utility function is assumed to be linear. Without loss of generality, we consider $U(x)=x$.   Recall from (\ref{eq:robust}) that $\lambda$ is the weight corresponding to the stall time. We assume $\lambda=100$, {\it i.e.}, we give more preference to minimize the stall time.  We set $\alpha$ as $0.95$, $x_k$ as $0.6$. %{\bf Arnob, Explain $\beta$, $W$, parameters of harmonic mean - how much past used, any dynamic window size policy used, etc. }
%\feng{[[Remind readers what $\alpha$ and $x_k$ are. Justify your parameter selection.]]}

\subsection{ Baseline Algorithms}
 We compare the online version of our proposed algorithm 360-ROBUST  with the following baseline algorithms (BA1 and BA2). %Note that since the utility is linear and the $|\mathcal{A}_{\alpha,k}|$ is the same for all $k$, hence by Theorem~\ref{thm:optimal} Algorithm 3 is optimal for $\mathcal{P}_{\text{robust}}$. We, thus, evaluate  the strengths of our proposed  Algorithms 1 and 3. For Algorithm 3, we set $\alpha$ at $0.95$. For comparison, we consider two simple strategies. These are explained in the following:

\textbf{BA1}: The first baseline algorithm will try to fetch all the tiles at the same quality. Thus, the qualities will be the same across the tiles.
If there is excess bandwidth, BA1 will first set the rate at the minimum rate possible for all the chunks. The algorithm then starts to set the rate at the next higher level for the last chunk if the stall time does not increase. The algorithm then does the same for the second last chunk. If there is still an additional bandwidth, the algorithm again sets the rate of the last chunk at the next highest level and so on. % The algorithm then again the tiles of the next higher qualities of the last chunk rather than the current chunk. The algorithm will then try to download the tiles in the next higher qualities of the second last chunk and so on.
%\feng{[[You shouldn't say ``fetch the tiles of the next higher qualities'' -- this is not feasible unless SVC is used. Instead this is part of the rate adaptation algorithm internals. Describe it in that way.]]}
Thus, similar to our proposed algorithm, BA1 uses extra bandwidth to fetch the tiles at higher qualities starting from the last chunk and so on. However, Algorithm 360-ROBUST  fetches the tiles within the set $\mathcal{A}_{\alpha,k}$ at a higher rate whereas BA1 fetches all the tiles  within a chunk  at the same rate. In addition, Algorithm 360-ROBUST also assigns the rate by solving a convex relaxation problem.

\textbf{BA2}: The second baseline algorithm, in contrast to the first one, fetches all the tiles of the current chunk at the highest possible rate if the bandwidth permits. Thus, the algorithm sets the highest possible rate for the current chunk such that all the tiles in the chunk can be fetched within the play-back time. 

This is a simple {\em greedy-type} algorithm, and thus, it is used widely in practice \cite{facebook}.
Similar to BA1, BA2 also fetches all the tiles within a chunk at the same rate. %  \feng{the following is unclear: in your Baseline algorithm, when you have extra BW, do you fetch the next chunk's tiles at a lower layer, or the current chunk's tiles at a higher layer?}

%This algorithm is This algorithm is proposed by \cite{feng}. The algorithm only fetches those tiles which have the highest probability to be the part of FoV. Again similar to the {\bf Baseline} algorithm it fetches the higher quality tiles of the current chunk first if there is an extra bandwidth before it fetches the tiles of the next chunk.  Though in this algorithm only a few tiles are required to be downloaded, if the prediction is not accurate, the viewer can see black spots even though it can save a lot of bandwidth. %\feng{This greedy algorithm is also not very clear. Need to provide more details of how it performs rate adaptation.}

We study the strength of our proposed algorithm Algorithm~360-ROBUST with respect to the two baseline algorithms. We evaluate the algorithms' achievable QoE (\ref{eq:robust}), the distributions of the bitrates in the FoV, and the impact of the window size $W$ on the overall QoE metrics by taking the average over 10 users. %We vary the identities of the 10 users and plot the worst case average result.  We also investigate the impact on the stall time, the bandwidth prediction error, and the FoV estimation error on the performance of our proposed algorithm. 

\vspace{-.1in}
\subsection{Results and Discussion}
\begin{figure}
	\vspace{-.1in}
%\begin{minipage}{0.26\textwidth}
%\includegraphics[width=.95\textwidth]{fov.pdf}
%\vspace{-0.2in}
%\caption{The tile based segmentation and the rectangular $2\times 3$ tiles is FoV with probability $\beta$.}
%\label{fig:fov}
%\vspace{-0.2in}
%\end{minipage}
%\begin{minipage}{0.35\textwidth}
\begin{minipage}{0.23\textwidth}
\includegraphics[trim=0in 0.2in 0.5in 0in, clip, width=\textwidth]{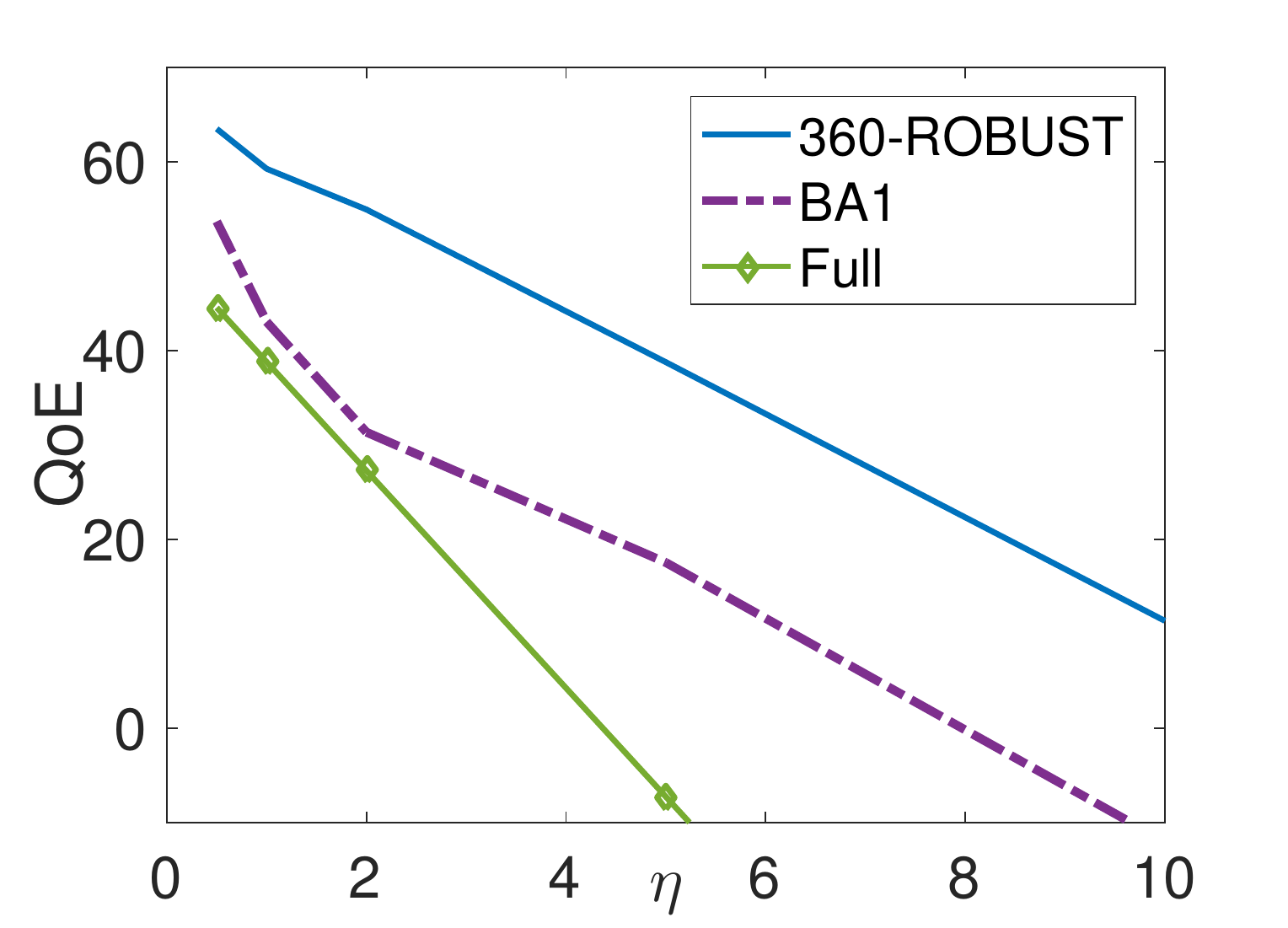}
%\vspace{-0.2in}
\caption{ The variation of QoE with $\eta$.}
\label{fig:eta}
\end{minipage}
%\vspace{-0.2in}
%\end{minipage}
%\begin{minipage}{0.35\textwidth}
%\includegraphics[width=.95\textwidth,height=1.6in]{qoevsbeta.eps}
%\vspace{-0.2in}
%\caption{The variation of QoE with $\beta$ for $\eta=1$.}
%\label{fig:beta}
%\vspace{-0.2in}
%\end{minipage}
%\vspace{-.1in}
%\end{figure}
%
%\begin{figure}
%\begin{minipage}{0.32\textwidth}
\begin{minipage}{0.23\textwidth}
\includegraphics[trim=0in 0.2in 0.5in 0in, clip,width=\textwidth]{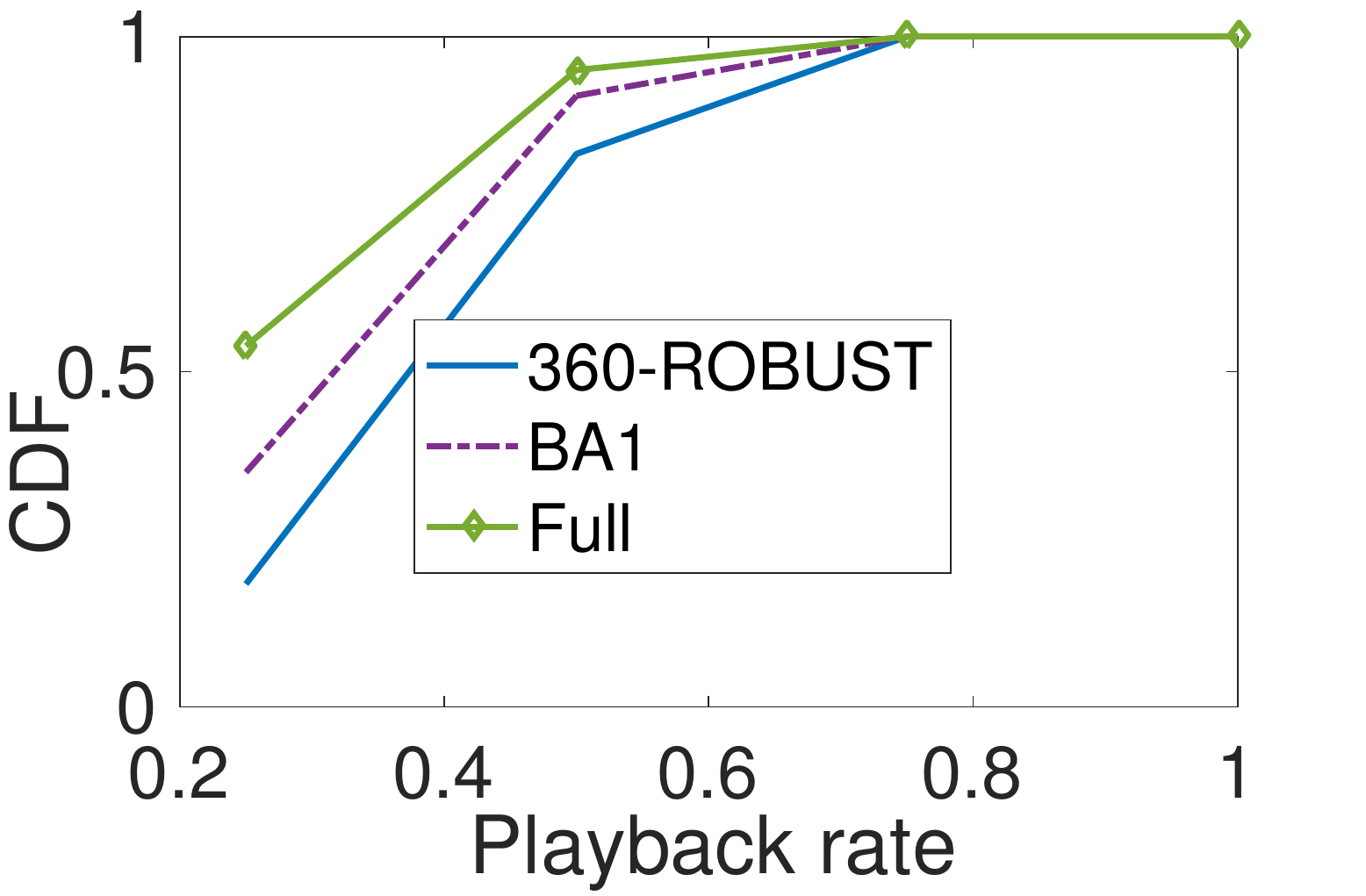}
\vspace{-0.3in}
\caption{Rate distribution of the minimum rates among the tiles within set $\mathcal{A}_{\alpha,k}$ (or, $\gamma_k$) for $\eta=1$.]}
\label{fig:rate}
\end{minipage}
\begin{minipage}{0.23\textwidth}
\includegraphics[trim=.1in 0in 0.6in 0.3in, clip,width=\textwidth]{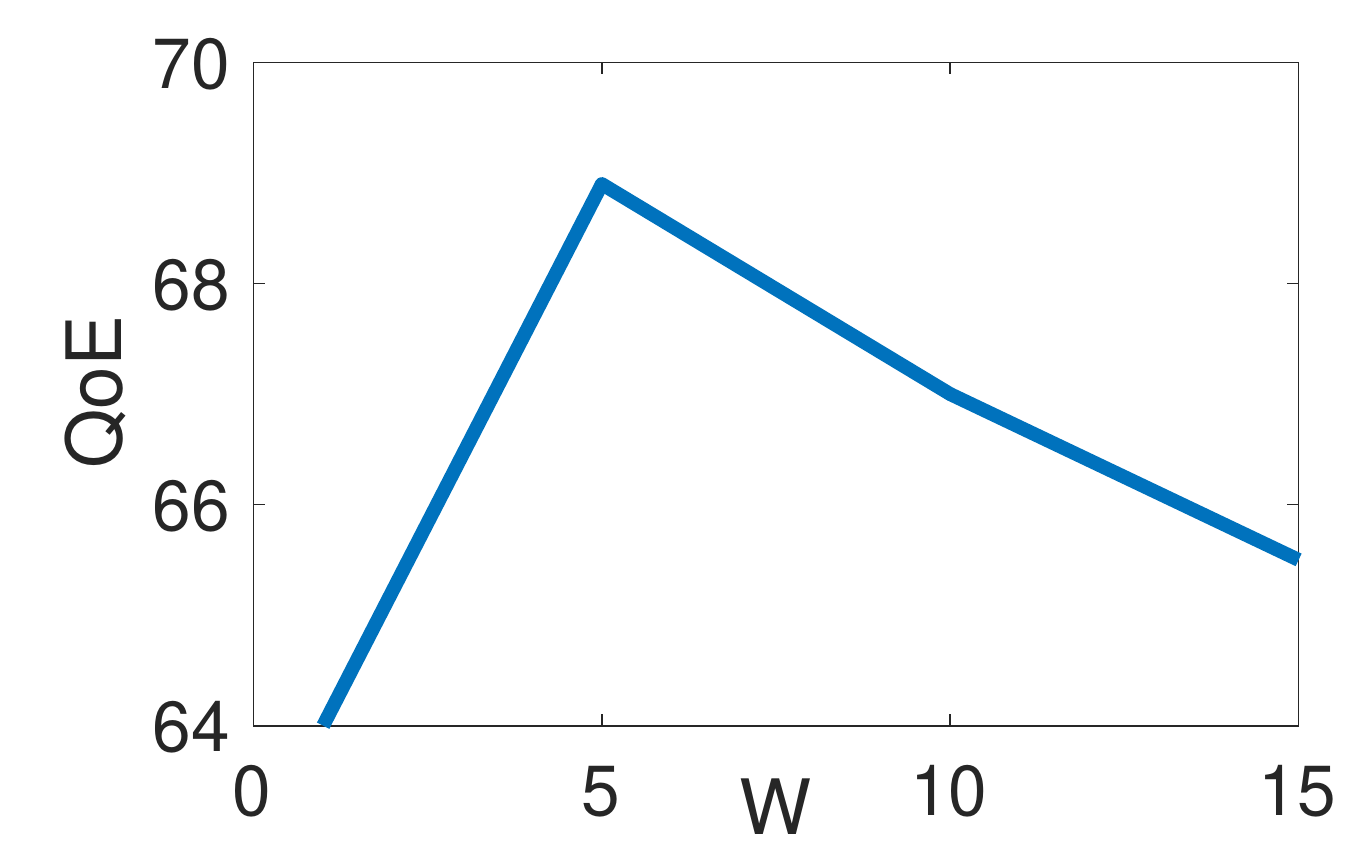}
\caption{{The variation of QoE with the window size $W$ with $\eta=1$.}
%\feng{[[Y Axis label is not fully shown. Y should begin with 0]]}
}
\label{fig:window}
\end{minipage}
%\vspace{-0.2in}
%\end{minipage}
%\begin{minipage}{0.32\textwidth}
%\includegraphics[width=.95\textwidth,height=1.6in]{quality_var.eps}
%\vspace{-0.2in}
%\caption{Total rate variation across the subsequent chunks (cf. (\ref{eq:diff})) as a function of $\beta$ for $\eta=1$.}
%\label{fig:quality}
%\vspace{-0.2in}
%\end{minipage}\hfill
%\begin{minipage}{0.32\textwidth}
%\includegraphics[width=.95\textwidth,height=1.7in]{stallvsbeta_stable.eps}
%\vspace{-0.2in}
%\caption{Variation of the total stall time with $\beta$.}
%\label{fig:stall}
\vspace{-0.2in}
%\end{minipage}
\end{figure}

%\feng{[[The biggest issue with evaluations is it's too short and simple. Expand the results section to at least 1.5 full pages (the methodology part is kinda verbose and can be further shortened). There are many other results that you can and should generate: the impact of FoV prediction, the impact of bandwidth prediction, the variation cross the users (try different sets of the 10 users), computation overhead, etc.]]}

 \subsubsection{Effect of $\eta$}
Recall from (\ref{eq:robust}) that $\eta$ is the weight corresponding to the quality variation across the chunks. Fig.~\ref{fig:eta} shows that as $\eta$ increases, the QoE decreases as is apparent from (\ref{eq:robust}). Since Algorithm 360-ROBUST is obtained from the optimal solution of $\mathcal{P}^{\mathrm{relaxed}}_{\mathrm{eq}}$, the performance of Algorithm 360-ROBUST would have been optimal if the number of quantization levels would have been large.

Note that the QoEs attained by the baseline algorithms are at least 30\% lower QoE compared to the our proposed algorithms.  Also note that BA2 is a greedy algorithm which tries to fetch the highest quality tiles for the current chunk. Hence, there will be more quality variation between subsequent chunks. Thus, the QoE attained by BA2 is the lowest. % the quality variation is smaller which gives higher QoE compared to BA2 when $\eta$ is high.

%\feng{[[It's unclear what the tradeoff is. I strongly recommend show different components of the QoE metrics separately (rate, stall, quality swtich) in addition to the current results.]]}

%\vspace{-.12in}

 \subsubsection{Distribution of the rates}
In Fig.~\ref{fig:rate}, we plot the distribution of the rates $\gamma_k$.  Most number of tiles are fetched at two nearby rates $0.5$ and $0.75$ Mbps. Algorithm 360-ROBUST also fetches a lot of tiles in the highest quality specifically for the later chunks where there is enough bandwidth for chunks to be fetched.

%Very few tiles are fetched at the highest rate. In contrast, Algorithm~\ref{algo3} fetches more number of tiles at the highest rate compared to Algorithm~\ref{algo4}. There is a lot more variation across the rates for the solution attained by Algorithm~\ref{algo4}. Algorithm~\ref{algo2} fetches lower number of tiles at highest rate compared to Algorithm~\ref{algo3}, however a higher number of tiles at the highest rate compared to Algorithm~\ref{algo4}. Algorithm~\ref{algo2} provides a solution where there is less variation across the rates  compared to Algorithm~\ref{algo3}, but has more variation compared to Algorithm~\ref{algo4}.

The two baseline algorithms hardly fetch any tiles at the highest rate. They mostly fetch the tiles at the lowest two possible rates. This is because BA1 and BA2 both need to fetch all the tiles within a chunk at the same quality. Thus, the bandwidth may not be enough to fetch the tiles at a higher rate. However, our proposed algorithm only fetches the tiles within the set $\mathcal{A}_{\alpha,k}$ at a higher rate, hence, lower bandwidth is required to give the same guaranteed rate. BA1 tries to equalize the rate among the chunks, hence, the distribution is close to inform over the chunks compared to BA2.

Our proposed algorithm provides the rate within the FoV with 50\% higher compared to the baseline algorithms. Intuitively, this is because the baseline algorithms download all the tiles within the chunk at a higher rate rather than only downloading the tiles within the set $\mathcal{A}_{\alpha,k}$. 

\begin{figure}
\begin{minipage}{0.23\textwidth}
\includegraphics[trim=.1in 0in 0.6in 0.3in, clip,width=\textwidth]{qoevsW.eps}
\caption{{The variation of QoE with the window size $W$ with $\eta=1$.}
%\feng{[[Y Axis label is not fully shown. Y should begin with 0]]}
}
\label{fig:window}
\end{minipage}
\begin{minipage}{0.23\textwidth}
\includegraphics[trim=.1in 0in 0.6in 0.3in, clip,width=\textwidth]{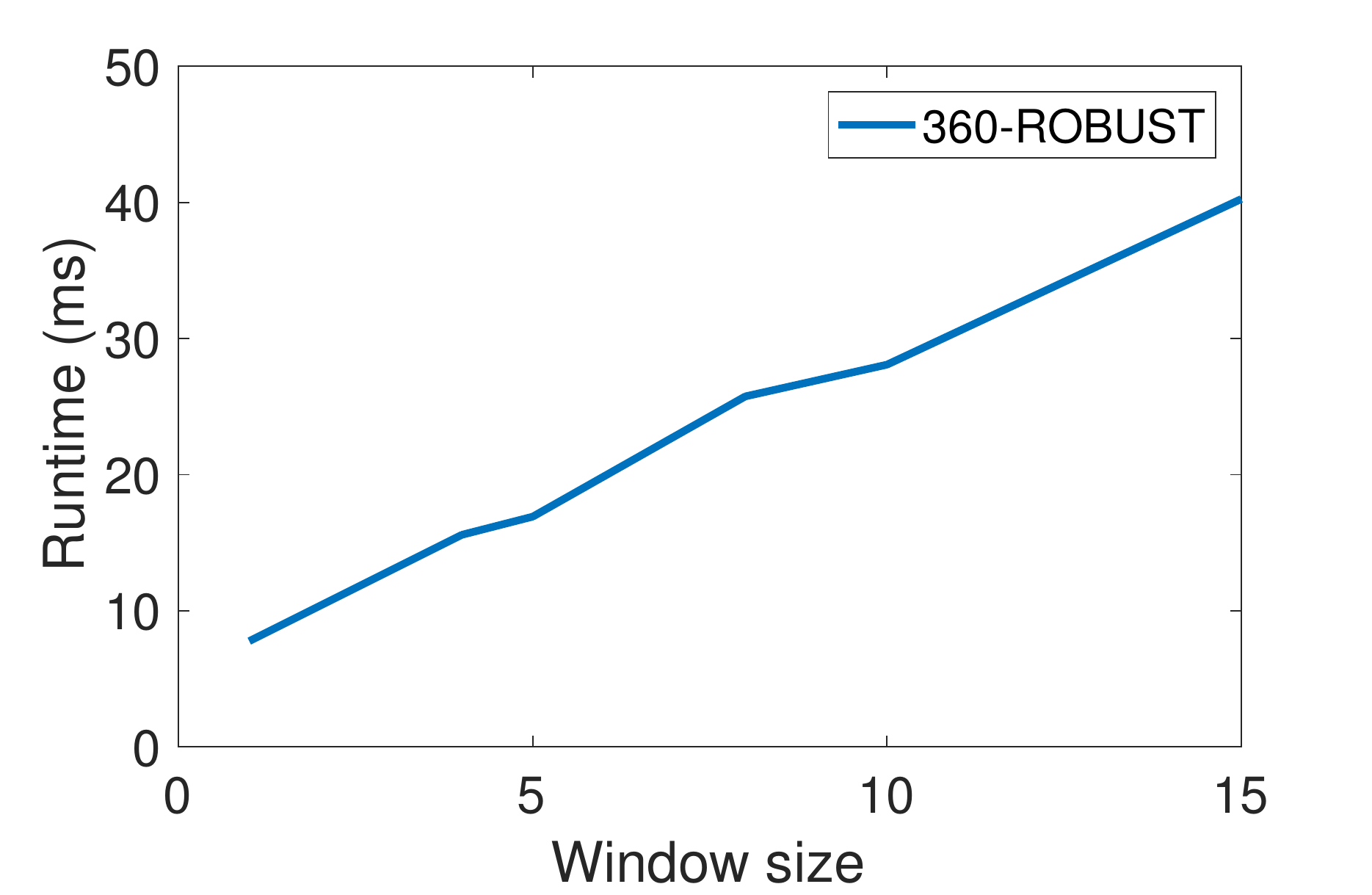}
\caption{Variation of Run-time of Algorithm 360-ROBUST with $W$}
\label{fig:run_time}
\end{minipage}
\vspace{-0.2in}
\end{figure}

\subsubsection{Effect of the Window Size}
 Fig.~\ref{fig:window} shows that as the window size increases, the QoE increases. However, the QoE decreases after a threshold. Intuitively, if $W$ is large, the algorithm 360-ROBUST may be conservative, and only use excess bandwidth for fetching tiles at a higher quality for the last few chunks. The FoV or bandwidth prediction may also be bad if $W$ is large, specifically, if there is a temporal correlation where the prediction is highly correlated over short time period. On the other hand, if $W$ is too small, the algorithm may become too greedy, leading to a higher quality variation  between the subsequent chunks. $W=5$ performs the best. Note that our proposed algorithm outperforms the baseline algorithms by almost 35\%. Note that since BA2 is a greedy one, the impact of the variation of $W$ is the least on BA2. 
 
Fig.~\ref{fig:run_time} shows the average run-times (over different bandwidth traces) with the window size for our proposed algorithm {360-ROBUST}. The run-time is calculated on 8GB-DDR3 RAM and 1.6GHz Intel Core i5 processor, using a C++ code. Fig~\ref{fig:run_time} shows that as the window size increases the run-time increases. This is because the decision space increases with $W$. %However, the rate of increment decreases with the increase in the window size.
 
 \begin{figure}
 \vspace{-0.1in}
 \begin{minipage}{0.23\textwidth}
 \includegraphics[trim=0.2in 0in 0.6in 0in, clip,width=\textwidth]{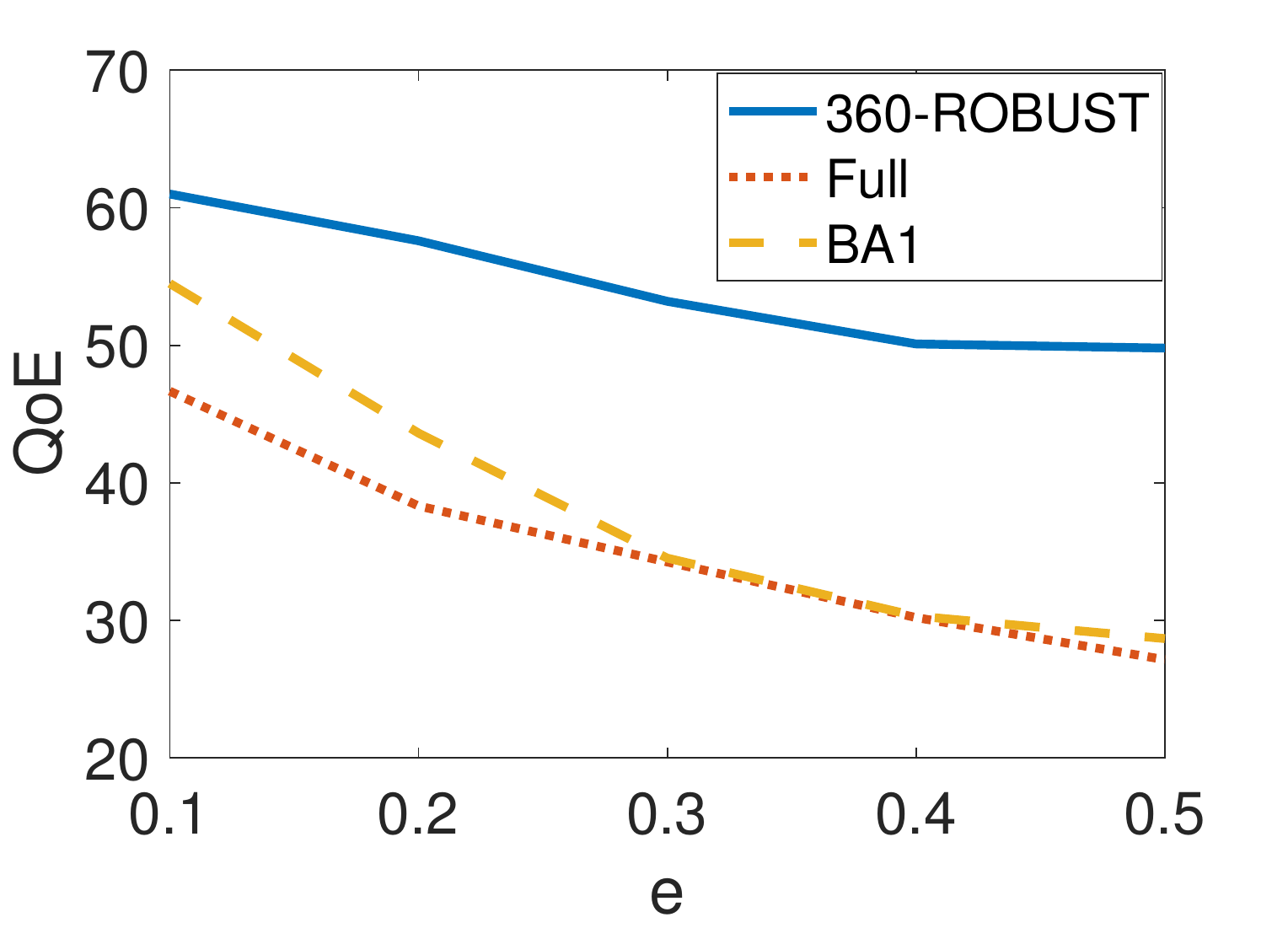}
\caption{The variation of QoE with the bandwidth prediction error ($e$)}
\label{fig:band}
 \end{minipage}
 \begin{minipage}{0.23 \textwidth}
  \includegraphics[trim=0.2in 0in 0.6in 0in, clip,width=\textwidth]{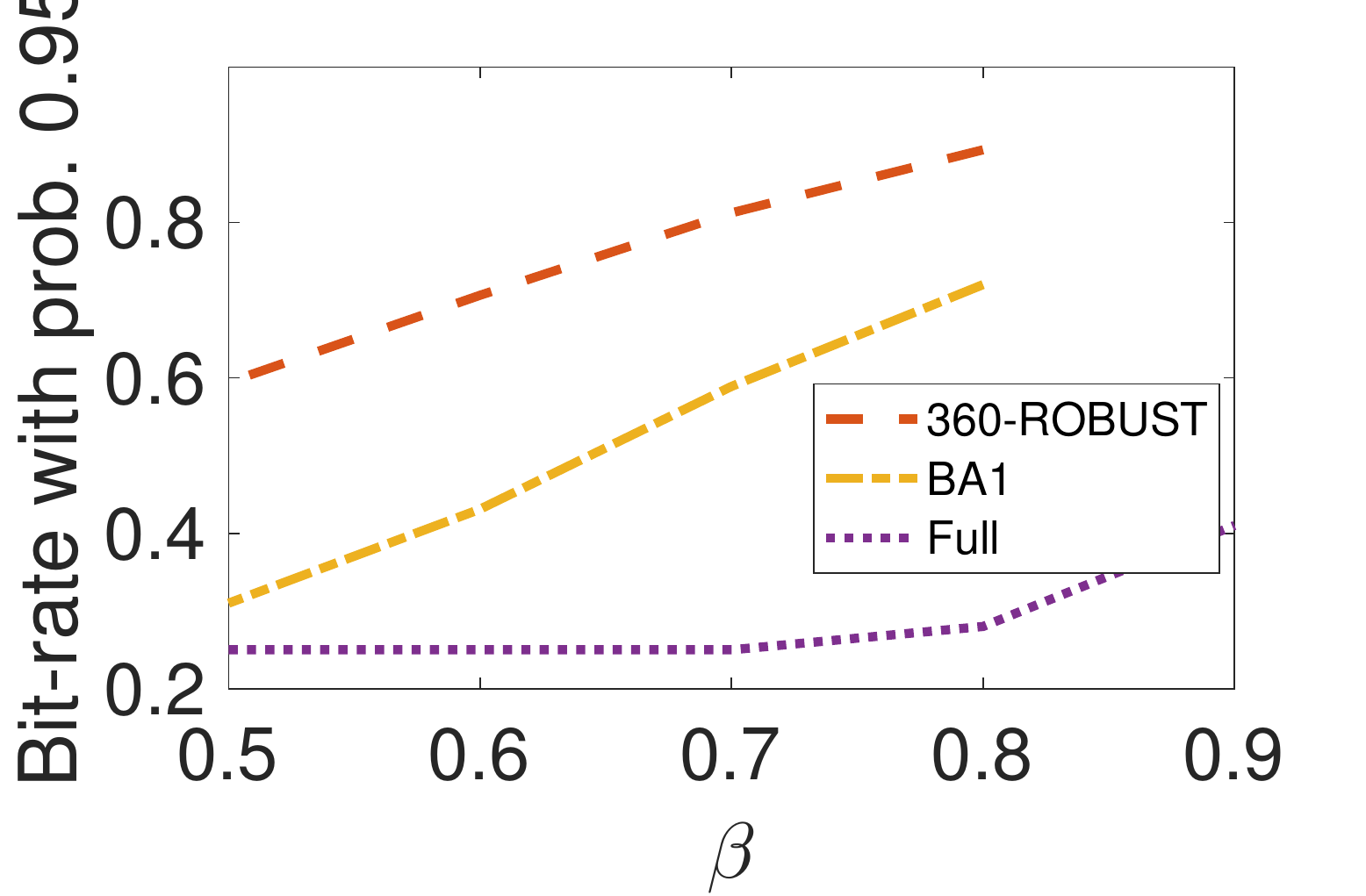}
\caption{The variation of the rate such that the quality degrades below that rate only with probability $0.05$ with FoV prediction error ($\beta$).}
\label{fig:fov}
 \end{minipage}
 \vspace{-0.2in}
 \end{figure}
% \begin{figure}
% \includegraphics[trim=0.3in 0in 0.6in, clip,width=0.24\textwidth]{band.eps}
% 
% \label{fig:band}
% \vspace{-0.2in}
% \end{figure}
 
 \subsubsection{Impact of Bandwidth prediction Error}
 In order to study the impact of the bandwidth prediction error we multiply the original bandwidth traces by a factor $(1+p)$ where $p$ is a randomly chosen from an uniform distribution $[-e,e]$. The impact of the variation of QoE with $e$ is shown in Fig.~\ref{fig:band}. As $e$ increases the QoE decreases as the uncertainty increases. However, the decrement is very small compared to the baseline algorithms. The greedy algorithm's performance is worse compared to BA1 since BA1 tries to use the extra bandwidth for retrieving chunks at a relatively higher quality for later chunks unlike the greedy one. 
 
\subsubsection{Impact of FoV prediction Error}
In order to show the robustness of the algorithms, we have used the traces of the users' head-position with probability $\beta$ and randomly select a FoV with probability $1-\beta$. We characterize the performance of the algorithms as $\beta$ increases.  
 Fig.~\ref{fig:fov} shows that as $\beta$ decreases the performance of our proposed algorithm 360-ROBUST degrades. However, the rate with which the FoV will be seen with probability $0.95$ is still higher compared to the baseline algorithms. Note that as $\beta$ increases, the FoV becomes random. Thus, the performance coincides with the baseline algorithms as the cardinality of the set $\mathcal{A}_{\alpha,k}$ increases. Since the baseline algorithm fetches every tile in the chunk at the same rate, thus, the 95 percentile rate is constant even though $\beta$ decreases. 

\begin{figure}
\vspace{-0.1in}
\includegraphics[width=0.25\textwidth]{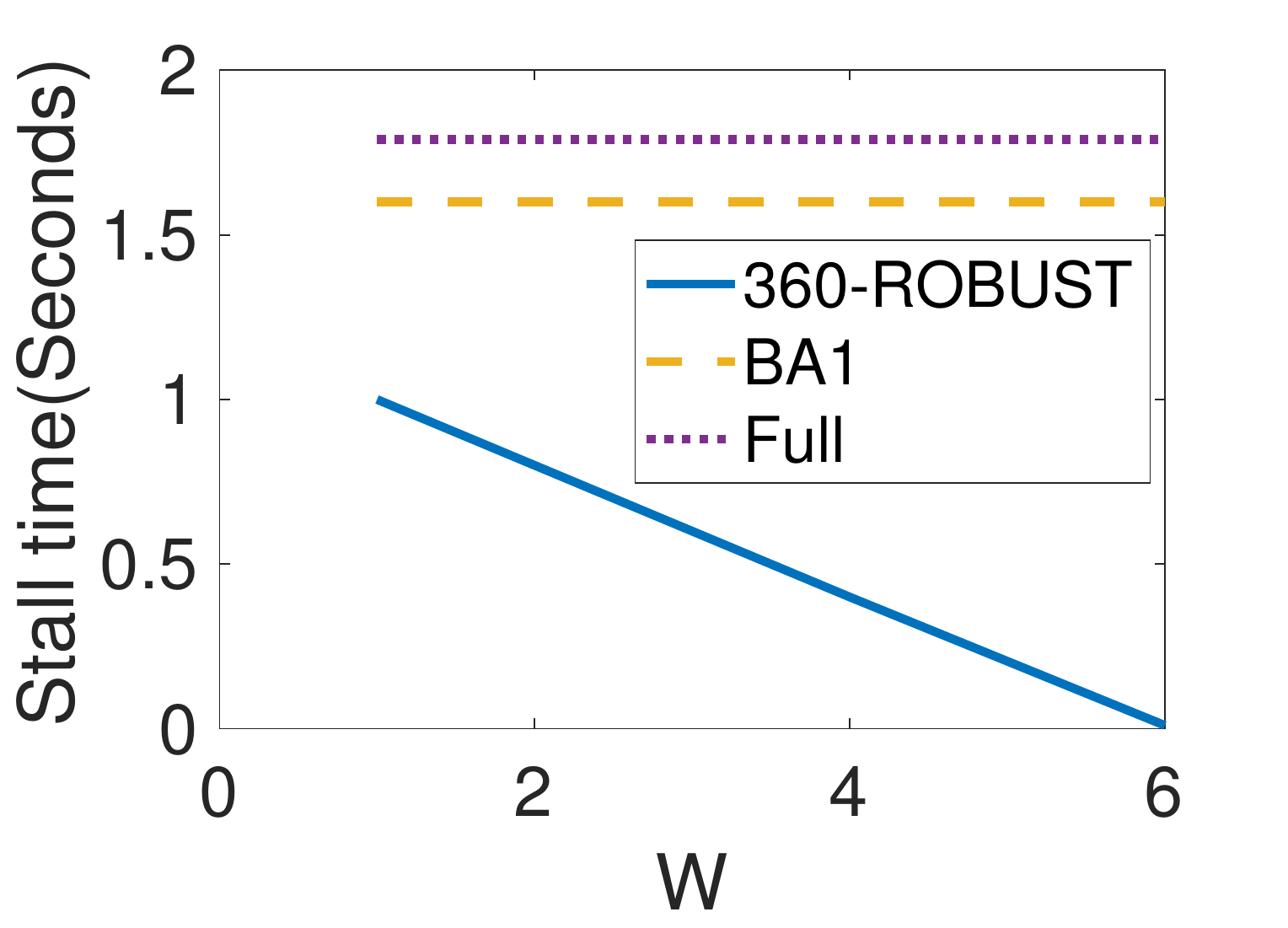}
\vspace{-0.1in}
\caption{Variation of the stall time with window size ($W$)}
\label{fig:stall}
\vspace{-0.3in}
\end{figure}
\subsubsection{Stall Time}
Fig.~\ref{fig:stall} shows that our proposed algorithm again out performs the baseline algorithms. Our analysis reveals that as window size ($W$) increases the stall time decreases.  Intuitively, when $W$ is small the algorithm is greedy, and thus wants to utilize the extra bandwidth to fetch tiles of the current chunk only. Hence, if the future bandwidth is poor it may lead to a stall. Since BA2 is  greedy algorithm, thus, its stall is the highest. Also note that BA1's stall time is still higher compared to our proposed algorithm. This is because BA1 sets the same rate to all the tiles of a current chunk. Thus, in our proposed algorithm we save extra bandwidth by fetching the tiles of the set $\mathcal{A}_{\alpha,k}$ compared to BA1 which may be used to reduce the stall time for future chunks in the event when the bandwidth is very small. 
%\subsubsection{Runtime analysis}
%Fig.~\ref{fig:run_time} shows the average run-times (over different bandwidth traces) of  the Algorithms~\ref{algo3} and \ref{algo4} as a function of the window-size (in number of chunks). The run-time is calculated on 8GB-DDR3 RAM and 1.6GHz Intel Core i5 processor, using a C++ code. Note that one of the motivations behind introducing Algorithms~\ref{algo3} and \ref{algo4} are their lower complexity. Fig.~\ref{fig:run_time} indeed confirms the above intuition. The computational time  of Algorithm~\ref{algo3} is 48.6 ms, whereas that of the Algorithm~\ref{algo4} is 24.6ms, when $W=15$ chunks. Fig~\ref{fig:run_time} shows that as the window size increases the run-time increases. This is because the cision space increases with $W$. However, the rate of increment decreases with the increase in the window size. Also note that Algorithm~\ref{algo4} is faster compared to Algorithm~\ref{algo3}.

%\vspace{-.12in}
%\vspace{-.15in}

%\input{apdx}

\section{Conclusions}\label{sec_concl}% and Future Work}
We formulate a robust QoE metric for  tile-based 360 degree video streaming by considering both the FoV and bandwidth variability. 
We consider a QoE metric which maximizes the streaming rate with which a user will watch at least with probability $\alpha$ along with minimizing the stall time, and the variation of rates across the chunk.
%
%\feng{[The QoE metric itself should not be aware of $\alpha$.]]}
We formulate the problem as a robust stochastic optimization problem. The problem turns out to be non-convex because of  discrete strategy space. We propose a novel algorithm, {360-ROBUST}, which gives a feasible solution from the optimal solution of a relaxed convex problem. We provide a theoretical optimal gap of the feasible solution provided by {360-ROBUST}. We also propose an online version of the algorithm which adapts to the real-time variation of the FoV and the bandwidth. Our empirical evaluation shows that our proposed algorithms provide 30\% higher QoE compared to the baseline algorithms.

% Note that we consider that the entire FoV renders the same quality. However, the rate of the center part of the FoV may dominate the periphery. The characterization of the optimal solution for scenario where the rate at the center part of the FoV renders a higher utility is a work for the future.

%In future, we will characterize the optimal rate which maximizes the expected QoE. In contrast to the robust approach, we, now, need to formulate the expected rate within the FoV and the rate variation across the chunks. The characterization of an optimal buffer-based algorithm for QoE maximization is also a work for the future.
%\bibliographystyle{ACM-Reference-Format}
\newpage
\clearpage
\bibliographystyle{ACM-Reference-Format}
\bibliography{360degreeref}

%%% -*-BibTeX-*-
%%% Do NOT edit. File created by BibTeX with style
%%% ACM-Reference-Format-Journals [18-Jan-2012].

\begin{thebibliography}{30}

%%% ====================================================================
%%% NOTE TO THE USER: you can override these defaults by providing
%%% customized versions of any of these macros before the \bibliography
%%% command.  Each of them MUST provide its own final punctuation,
%%% except for \shownote{}, \showDOI{}, and \showURL{}.  The latter two
%%% do not use final punctuation, in order to avoid confusing it with
%%% the Web address.
%%%
%%% To suppress output of a particular field, define its macro to expand
%%% to an empty string, or better, \unskip, like this:
%%%
%%% \newcommand{\showDOI}[1]{\unskip}   % LaTeX syntax
%%%
%%% \def \showDOI #1{\unskip}           % plain TeX syntax
%%%
%%% ====================================================================

\ifx \showCODEN    \undefined \def \showCODEN     #1{\unskip}     \fi
\ifx \showDOI      \undefined \def \showDOI       #1{#1}\fi
\ifx \showISBNx    \undefined \def \showISBNx     #1{\unskip}     \fi
\ifx \showISBNxiii \undefined \def \showISBNxiii  #1{\unskip}     \fi
\ifx \showISSN     \undefined \def \showISSN      #1{\unskip}     \fi
\ifx \showLCCN     \undefined \def \showLCCN      #1{\unskip}     \fi
\ifx \shownote     \undefined \def \shownote      #1{#1}          \fi
\ifx \showarticletitle \undefined \def \showarticletitle #1{#1}   \fi
\ifx \showURL      \undefined \def \showURL       {\relax}        \fi
% The following commands are used for tagged output and should be
% invisible to TeX
\providecommand\bibfield[2]{#2}
\providecommand\bibinfo[2]{#2}
\providecommand\natexlab[1]{#1}
\providecommand\showeprint[2][]{arXiv:#2}

\bibitem[\protect\citeauthoryear{??}{fb}{[n. d.]}]%
        {fb}
 \bibinfo{year}{[n. d.]}\natexlab{}.
\newblock   (\bibinfo{year}{[n. d.]}).
\newblock
\urldef\tempurl%
\url{https://goo.gl/r8HXMq}
\showURL{%
\tempurl}
\newblock
\shownote{Accessed 7th April, 20018.}


\bibitem[\protect\citeauthoryear{??}{you}{2015}]%
        {youtube}
 \bibinfo{year}{2015}\natexlab{}.
\newblock   (\bibinfo{year}{2015}).
\newblock
\urldef\tempurl%
\url{https://goo.gl/W1PRWr}
\showURL{%
\tempurl}
\newblock
\shownote{Accessed 5th April, 2018.}


\bibitem[\protect\citeauthoryear{??}{car}{2017}]%
        {cardboard}
 \bibinfo{year}{2017}\natexlab{}.
\newblock   (\bibinfo{year}{2017}).
\newblock
\urldef\tempurl%
\url{https://vr.google.com/cardboard/}
\showURL{%
\tempurl}


\bibitem[\protect\citeauthoryear{Bao, Wu, Zhang, Ramli, and Liu}{Bao
  et~al\mbox{.}}{2016}]%
        {bao}
\bibfield{author}{\bibinfo{person}{Yanan Bao}, \bibinfo{person}{Huasen Wu},
  \bibinfo{person}{Tianxiao Zhang}, \bibinfo{person}{Albara Ramli}, {and}
  \bibinfo{person}{Xin Liu}.} \bibinfo{year}{2016}\natexlab{}.
\newblock \showarticletitle{Shooting a Moving Target: Motion-Prediction-Based
  Transmission for 360-Degree Videos}. In \bibinfo{booktitle}{\emph{IEEE
  International Conference on Big Data}}.
\newblock


\bibitem[\protect\citeauthoryear{Begole}{Begole}{2017}]%
        {wireline}
\bibfield{author}{\bibinfo{person}{Bo Begole}.}
  \bibinfo{year}{2017}\natexlab{}.
\newblock \bibinfo{title}{Why The Internet Pipes Will Burst When Virtual
  Reality Takes Off}.
\newblock   (\bibinfo{year}{2017}).
\newblock
\urldef\tempurl%
\url{http://www.forbes.com/sites/valleyvoices/2016/02/09/why-the-internet-pipes-will-burst-if-virtual-reality-takes-off}
\showURL{%
\tempurl}
\newblock
\shownote{accessed January 27th 2017.}


\bibitem[\protect\citeauthoryear{Chen, Agarwal, Wierman, Barman, and
  Andrew}{Chen et~al\mbox{.}}{2015}]%
        {adam}
\bibfield{author}{\bibinfo{person}{Niangjun Chen}, \bibinfo{person}{Anish
  Agarwal}, \bibinfo{person}{Adam Wierman}, \bibinfo{person}{Siddharth Barman},
  {and} \bibinfo{person}{Lachlan L.~H. Andrew}.}
  \bibinfo{year}{2015}\natexlab{}.
\newblock \showarticletitle{Online Convex Optimization Using Predictions}.
\newblock \bibinfo{journal}{\emph{CoRR}}  \bibinfo{volume}{abs/1504.06681}
  (\bibinfo{year}{2015}).
\newblock
\showeprint[arxiv]{1504.06681}
\urldef\tempurl%
\url{http://arxiv.org/abs/1504.06681}
\showURL{%
\tempurl}


\bibitem[\protect\citeauthoryear{Corbillon, Devlic, Simon, and
  Chakareski}{Corbillon et~al\mbox{.}}{2017}]%
        {Corbillon:2017}
\bibfield{author}{\bibinfo{person}{Xavier Corbillon}, \bibinfo{person}{Alisa
  Devlic}, \bibinfo{person}{Gwendal Simon}, {and} \bibinfo{person}{Jacob
  Chakareski}.} \bibinfo{year}{2017}\natexlab{}.
\newblock \showarticletitle{Optimal Set of 360-Degree Videos for
  Viewport-Adaptive Streaming}. In \bibinfo{booktitle}{\emph{Proceedings of the
  2017 ACM on Multimedia Conference}} \emph{(\bibinfo{series}{MM '17})}.
  \bibinfo{publisher}{ACM}, \bibinfo{address}{New York, NY, USA},
  \bibinfo{pages}{943--951}.
\newblock
\showISBNx{978-1-4503-4906-2}
\urldef\tempurl%
\url{https://doi.org/10.1145/3123266.3123372}
\showDOI{\tempurl}


\bibitem[\protect\citeauthoryear{D'Acunto, van~den Berg, Thomas, and
  Niamut}{D'Acunto et~al\mbox{.}}{2016}]%
        {D'Acunto}
\bibfield{author}{\bibinfo{person}{Lucia D'Acunto}, \bibinfo{person}{Jorrit
  van~den Berg}, \bibinfo{person}{Emmanuel Thomas}, {and} \bibinfo{person}{Omar
  Niamut}.} \bibinfo{year}{2016}\natexlab{}.
\newblock \showarticletitle{Using MPEG DASH SRD for Zoomable and Navigable
  Video}. In \bibinfo{booktitle}{\emph{Proceedings of the 7th International
  Conference on Multimedia Systems}} \emph{(\bibinfo{series}{MMSys '16})},
  Vol.~\bibinfo{volume}{34}. \bibinfo{publisher}{ACM}, \bibinfo{address}{New
  York, NY, USA}, \bibinfo{pages}{34:1--34:4}.
\newblock
\showISBNx{978-1-4503-4297-1}


\bibitem[\protect\citeauthoryear{Delage and Mannor}{Delage and Mannor}{2010}]%
        {alpha}
\bibfield{author}{\bibinfo{person}{Erick Delage} {and} \bibinfo{person}{Shie
  Mannor}.} \bibinfo{year}{2010}\natexlab{}.
\newblock \showarticletitle{Percentile Optimization for Markov Decision
  Processes with Parameter Uncertainty}.
\newblock \bibinfo{journal}{\emph{Operations Research}} \bibinfo{volume}{58},
  \bibinfo{number}{1} (\bibinfo{year}{2010}), \bibinfo{pages}{203--213}.
\newblock
\urldef\tempurl%
\url{https://doi.org/10.1287/opre.1080.0685}
\showDOI{\tempurl}


\bibitem[\protect\citeauthoryear{El-Ganainy and Hafeeda}{El-Ganainy and
  Hafeeda}{2016}]%
        {ganainy}
\bibfield{author}{\bibinfo{person}{Tarek El-Ganainy} {and}
  \bibinfo{person}{Mohamed Hafeeda}.} \bibinfo{year}{2016}\natexlab{}.
\newblock \showarticletitle{Streaming Virtual Reality Content}.
\newblock \bibinfo{journal}{\emph{CoRR}}  \bibinfo{volume}{abs/1612.08350}
  (\bibinfo{year}{2016}).
\newblock
\urldef\tempurl%
\url{https://arxiv.org/abs/1612.08350}
\showURL{%
\tempurl}


\bibitem[\protect\citeauthoryear{Graf, Timmerer, and Mueller}{Graf
  et~al\mbox{.}}{2017}]%
        {Graf}
\bibfield{author}{\bibinfo{person}{Mario Graf}, \bibinfo{person}{Christian
  Timmerer}, {and} \bibinfo{person}{Christopher Mueller}.}
  \bibinfo{year}{2017}\natexlab{}.
\newblock \showarticletitle{Towards Bandwidth Efficient Adaptive Streaming of
  Omnidirectional Video over HTTP: Design, Implementation, and Evaluation}. In
  \bibinfo{booktitle}{\emph{Proceedings of the 8th ACM on Multimedia Systems
  Conference}} \emph{(\bibinfo{series}{MMSys'17})}. \bibinfo{publisher}{ACM},
  \bibinfo{address}{New York, NY, USA}, \bibinfo{pages}{261--271}.
\newblock
\showISBNx{978-1-4503-5002-0}
\urldef\tempurl%
\url{https://doi.org/10.1145/3083187.3084016}
\showDOI{\tempurl}


\bibitem[\protect\citeauthoryear{Ha, Rhee, and Xu}{Ha et~al\mbox{.}}{2008}]%
        {ha2008cubic}
\bibfield{author}{\bibinfo{person}{Sangtae Ha}, \bibinfo{person}{Injong Rhee},
  {and} \bibinfo{person}{Lisong Xu}.} \bibinfo{year}{2008}\natexlab{}.
\newblock \showarticletitle{CUBIC: a new TCP-friendly high-speed TCP variant}.
\newblock \bibinfo{journal}{\emph{ACM SIGOPS operating systems review}}
  \bibinfo{volume}{42}, \bibinfo{number}{5} (\bibinfo{year}{2008}),
  \bibinfo{pages}{64--74}.
\newblock


\bibitem[\protect\citeauthoryear{Hosseini, Peters, and Shirmohammadi}{Hosseini
  et~al\mbox{.}}{2013}]%
        {Hosseini1}
\bibfield{author}{\bibinfo{person}{Mohammad Hosseini}, \bibinfo{person}{Joseph
  Peters}, {and} \bibinfo{person}{Shervin Shirmohammadi}.}
  \bibinfo{year}{2013}\natexlab{}.
\newblock \showarticletitle{Energy-Efficient 3D Texture Streaming for Mobile
  Games}. In \bibinfo{booktitle}{\emph{Proceedings of Workshop on Mobile Video
  Delivery}} \emph{(\bibinfo{series}{MoViD'14})}, Vol.~\bibinfo{volume}{5}.
  \bibinfo{publisher}{ACM}, \bibinfo{address}{New York, NY, USA},
  \bibinfo{pages}{5:1--5:6}.
\newblock
\showISBNx{978-1-4503-2707-7}


\bibitem[\protect\citeauthoryear{Hosseini and Swaminathan}{Hosseini and
  Swaminathan}{2016}]%
        {hosseini}
\bibfield{author}{\bibinfo{person}{Mohammad Hosseini} {and}
  \bibinfo{person}{Viswanathan Swaminathan}.} \bibinfo{year}{2016}\natexlab{}.
\newblock \showarticletitle{Adaptive 360 {VR} Video Streaming: Divide and
  Conquer!}
\newblock \bibinfo{journal}{\emph{CoRR}}  \bibinfo{volume}{abs/1609.08729}
  (\bibinfo{year}{2016}).
\newblock
\urldef\tempurl%
\url{http://arxiv.org/abs/1609.08729}
\showURL{%
\tempurl}


\bibitem[\protect\citeauthoryear{Jiang, Sekar, and Zhang}{Jiang
  et~al\mbox{.}}{2012}]%
        {jiang}
\bibfield{author}{\bibinfo{person}{Junchen Jiang}, \bibinfo{person}{Vyas
  Sekar}, {and} \bibinfo{person}{Hui Zhang}.} \bibinfo{year}{2012}\natexlab{}.
\newblock \showarticletitle{Improving Fairness, Efficiency, and Stability in
  HTTP-based Adaptive Video Streaming with FESTIVE}. In
  \bibinfo{booktitle}{\emph{Proceedings of the 8th International Conference on
  Emerging Networking Experiments and Technologies}}
  \emph{(\bibinfo{series}{CoNEXT '12})}. \bibinfo{publisher}{ACM},
  \bibinfo{address}{New York, NY, USA}, \bibinfo{pages}{97--108}.
\newblock


\bibitem[\protect\citeauthoryear{Kuzyakov and Pio}{Kuzyakov and Pio}{2017}]%
        {facebook}
\bibfield{author}{\bibinfo{person}{Evgeny Kuzyakov} {and}
  \bibinfo{person}{David Pio}.} \bibinfo{year}{2017}\natexlab{}.
\newblock \bibinfo{title}{Next-generation video encoding techniques for 360
  video and VR}.
\newblock   (\bibinfo{year}{2017}).
\newblock
\urldef\tempurl%
\url{https://code.facebook.com/posts/1126354007399553/next-generation-video-encoding-techniques-for-360-video-and-vr/}
\showURL{%
\tempurl}
\newblock
\shownote{accessed January 28th 2017.}


\bibitem[\protect\citeauthoryear{Le~Feuvre and Concolato}{Le~Feuvre and
  Concolato}{2016}]%
        {LeFeuvre}
\bibfield{author}{\bibinfo{person}{Jean Le~Feuvre} {and} \bibinfo{person}{Cyril
  Concolato}.} \bibinfo{year}{2016}\natexlab{}.
\newblock \showarticletitle{Tiled-based Adaptive Streaming Using MPEG-DASH}. In
  \bibinfo{booktitle}{\emph{Proceedings of the 7th International Conference on
  Multimedia Systems}} \emph{(\bibinfo{series}{MMSys '16})},
  Vol.~\bibinfo{volume}{41}. \bibinfo{publisher}{ACM}, \bibinfo{address}{New
  York, NY, USA}, \bibinfo{pages}{41:1--41:3}.
\newblock
\showISBNx{978-1-4503-4297-1}


\bibitem[\protect\citeauthoryear{Li, Chen, and Low}{Li et~al\mbox{.}}{2011}]%
        {low2}
\bibfield{author}{\bibinfo{person}{N. Li}, \bibinfo{person}{L. Chen}, {and}
  \bibinfo{person}{S.~H. Low}.} \bibinfo{year}{2011}\natexlab{}.
\newblock \showarticletitle{Optimal demand response based on utility
  maximization in power networks}. In \bibinfo{booktitle}{\emph{2011 IEEE Power
  and Energy Society General Meeting}}. \bibinfo{pages}{1--8}.
\newblock
\showISSN{1932-5517}
\urldef\tempurl%
\url{https://doi.org/10.1109/PES.2011.6039082}
\showDOI{\tempurl}


\bibitem[\protect\citeauthoryear{Petrangeli, Swaminathan, Hosseini, and
  De~Turck}{Petrangeli et~al\mbox{.}}{2017}]%
        {Petrangeli}
\bibfield{author}{\bibinfo{person}{Stefano Petrangeli},
  \bibinfo{person}{Viswanathan Swaminathan}, \bibinfo{person}{Mohammad
  Hosseini}, {and} \bibinfo{person}{Filip De~Turck}.}
  \bibinfo{year}{2017}\natexlab{}.
\newblock \showarticletitle{Improving Virtual Reality Streaming Using HTTP/2}.
  In \bibinfo{booktitle}{\emph{Proceedings of the 8th ACM on Multimedia Systems
  Conference}} \emph{(\bibinfo{series}{MMSys'17})}. \bibinfo{publisher}{ACM},
  \bibinfo{address}{New York, NY, USA}, \bibinfo{pages}{225--228}.
\newblock
\showISBNx{978-1-4503-5002-0}
\urldef\tempurl%
\url{https://doi.org/10.1145/3083187.3083224}
\showDOI{\tempurl}


\bibitem[\protect\citeauthoryear{Qian, Ji, Han, and Gopalakrishnan}{Qian
  et~al\mbox{.}}{2016}]%
        {feng}
\bibfield{author}{\bibinfo{person}{Feng Qian}, \bibinfo{person}{Lusheng Ji},
  \bibinfo{person}{Bo Han}, {and} \bibinfo{person}{Vijay Gopalakrishnan}.}
  \bibinfo{year}{2016}\natexlab{}.
\newblock \showarticletitle{Optimizing 360 Video Delivery over Cellular
  Networks}. In \bibinfo{booktitle}{\emph{Proceedings of the 5th Workshop on
  All Things Cellular: Operations, Applications and Challenges}}
  \emph{(\bibinfo{series}{ATC '16})}. \bibinfo{publisher}{ACM},
  \bibinfo{address}{New York, NY, USA}, \bibinfo{pages}{1--6}.
\newblock
\showISBNx{978-1-4503-4249-0}


\bibitem[\protect\citeauthoryear{Rizzo}{Rizzo}{1997}]%
        {rizzo1997dummynet}
\bibfield{author}{\bibinfo{person}{Luigi Rizzo}.}
  \bibinfo{year}{1997}\natexlab{}.
\newblock \showarticletitle{Dummynet: a simple approach to the evaluation of
  network protocols}.
\newblock \bibinfo{journal}{\emph{ACM SIGCOMM Computer Communication Review}}
  \bibinfo{volume}{27}, \bibinfo{number}{1} (\bibinfo{year}{1997}),
  \bibinfo{pages}{31--41}.
\newblock


\bibitem[\protect\citeauthoryear{Schwarz, Marpe, and Wiegand}{Schwarz
  et~al\mbox{.}}{2007}]%
        {svc}
\bibfield{author}{\bibinfo{person}{H. Schwarz}, \bibinfo{person}{D. Marpe},
  {and} \bibinfo{person}{T. Wiegand}.} \bibinfo{year}{2007}\natexlab{}.
\newblock \showarticletitle{Overview of the Scalable Video Coding Extension of
  the H.264/AVC Standard}.
\newblock \bibinfo{journal}{\emph{IEEE Transactions on Circuits and Systems for
  Video Technology}} \bibinfo{volume}{17}, \bibinfo{number}{9}
  (\bibinfo{date}{Sept} \bibinfo{year}{2007}), \bibinfo{pages}{1103--1120}.
\newblock


\bibitem[\protect\citeauthoryear{van~der Hooft, Petrangeli, Wauters, Huysegems,
  Alface, Bostoen, and De~Turck}{van~der Hooft et~al\mbox{.}}{2016}]%
        {trace1}
\bibfield{author}{\bibinfo{person}{J. van~der Hooft}, \bibinfo{person}{S.
  Petrangeli}, \bibinfo{person}{T. Wauters}, \bibinfo{person}{R. Huysegems},
  \bibinfo{person}{P.~R. Alface}, \bibinfo{person}{T. Bostoen}, {and}
  \bibinfo{person}{F. De~Turck}.} \bibinfo{year}{2016}\natexlab{}.
\newblock \showarticletitle{{HTTP/2-Based Adaptive Streaming of HEVC Video Over
  4G/LTE Networks}}.
\newblock \bibinfo{journal}{\emph{IEEE Communications Letters}}
  \bibinfo{volume}{20}, \bibinfo{number}{11} (\bibinfo{year}{2016}),
  \bibinfo{pages}{2177--2180}.
\newblock


\bibitem[\protect\citeauthoryear{Wang and Crowcroft}{Wang and
  Crowcroft}{1996}]%
        {wang}
\bibfield{author}{\bibinfo{person}{Zheng Wang} {and} \bibinfo{person}{J.
  Crowcroft}.} \bibinfo{year}{1996}\natexlab{}.
\newblock \showarticletitle{Quality-of-service routing for supporting
  multimedia applications}.
\newblock \bibinfo{journal}{\emph{IEEE Journal on Selected Areas in
  Communications}} \bibinfo{volume}{14}, \bibinfo{number}{7}
  (\bibinfo{date}{Sep} \bibinfo{year}{1996}), \bibinfo{pages}{1228--1234}.
\newblock


\bibitem[\protect\citeauthoryear{Wiegand, Sullivan, Bjontegaard, and
  Luthra}{Wiegand et~al\mbox{.}}{2003}]%
        {avc}
\bibfield{author}{\bibinfo{person}{T. Wiegand}, \bibinfo{person}{G.~J.
  Sullivan}, \bibinfo{person}{G. Bjontegaard}, {and} \bibinfo{person}{A.
  Luthra}.} \bibinfo{year}{2003}\natexlab{}.
\newblock \showarticletitle{Overview of the H.264/AVC video coding standard}.
\newblock \bibinfo{journal}{\emph{IEEE Transactions on Circuits and Systems for
  Video Technology}} \bibinfo{volume}{13}, \bibinfo{number}{7}
  (\bibinfo{date}{July} \bibinfo{year}{2003}), \bibinfo{pages}{560--576}.
\newblock


\bibitem[\protect\citeauthoryear{Xiao, Zhou, Liu, and Chen}{Xiao
  et~al\mbox{.}}{2017}]%
        {Xiao:2017}
\bibfield{author}{\bibinfo{person}{Mengbai Xiao}, \bibinfo{person}{Chao Zhou},
  \bibinfo{person}{Yao Liu}, {and} \bibinfo{person}{Songqing Chen}.}
  \bibinfo{year}{2017}\natexlab{}.
\newblock \showarticletitle{OpTile: Toward Optimal Tiling in 360-degree Video
  Streaming}. In \bibinfo{booktitle}{\emph{Proceedings of the 2017 ACM on
  Multimedia Conference}} \emph{(\bibinfo{series}{MM '17})}.
  \bibinfo{publisher}{ACM}, \bibinfo{address}{New York, NY, USA},
  \bibinfo{pages}{708--716}.
\newblock
\showISBNx{978-1-4503-4906-2}
\urldef\tempurl%
\url{https://doi.org/10.1145/3123266.3123339}
\showDOI{\tempurl}


\bibitem[\protect\citeauthoryear{Xie, Xu, Ban, Zhang, and Guo}{Xie
  et~al\mbox{.}}{2017}]%
        {Xie:2017}
\bibfield{author}{\bibinfo{person}{Lan Xie}, \bibinfo{person}{Zhimin Xu},
  \bibinfo{person}{Yixuan Ban}, \bibinfo{person}{Xinggong Zhang}, {and}
  \bibinfo{person}{Zongming Guo}.} \bibinfo{year}{2017}\natexlab{}.
\newblock \showarticletitle{360ProbDASH: Improving QoE of 360 Video Streaming
  Using Tile-based HTTP Adaptive Streaming}. In
  \bibinfo{booktitle}{\emph{Proceedings of the 2017 ACM on Multimedia
  Conference}} \emph{(\bibinfo{series}{MM '17})}. \bibinfo{publisher}{ACM},
  \bibinfo{address}{New York, NY, USA}, \bibinfo{pages}{315--323}.
\newblock
\showISBNx{978-1-4503-4906-2}
\urldef\tempurl%
\url{https://doi.org/10.1145/3123266.3123291}
\showDOI{\tempurl}


\bibitem[\protect\citeauthoryear{Yin, Jindal, Sekar, and Sinopoli}{Yin
  et~al\mbox{.}}{2015}]%
        {yin}
\bibfield{author}{\bibinfo{person}{Xiaoqi Yin}, \bibinfo{person}{Abhishek
  Jindal}, \bibinfo{person}{Vyas Sekar}, {and} \bibinfo{person}{Bruno
  Sinopoli}.} \bibinfo{year}{2015}\natexlab{}.
\newblock \showarticletitle{A Control-Theoretic Approach for Dynamic Adaptive
  Video Streaming over HTTP}. In \bibinfo{booktitle}{\emph{Proceedings of the
  2015 ACM Conference on Special Interest Group on Data Communication}}
  \emph{(\bibinfo{series}{SIGCOMM '15})}. \bibinfo{publisher}{ACM},
  \bibinfo{address}{New York, NY, USA}, \bibinfo{pages}{325--338}.
\newblock


\bibitem[\protect\citeauthoryear{Zare, Aminlou, Hannuksela, and Gabbouj}{Zare
  et~al\mbox{.}}{2016}]%
        {tile1}
\bibfield{author}{\bibinfo{person}{Alireza Zare}, \bibinfo{person}{Alireza
  Aminlou}, \bibinfo{person}{Miska~M. Hannuksela}, {and}
  \bibinfo{person}{Moncef Gabbouj}.} \bibinfo{year}{2016}\natexlab{}.
\newblock \showarticletitle{HEVC-compliant Tile-based Streaming of Panoramic
  Video for Virtual Reality Applications}. In
  \bibinfo{booktitle}{\emph{Proceedings of the 2016 ACM on Multimedia
  Conference}} \emph{(\bibinfo{series}{MM '16})}. \bibinfo{publisher}{ACM},
  \bibinfo{address}{New York, NY, USA}, \bibinfo{pages}{601--605}.
\newblock
\showISBNx{978-1-4503-3603-1}
\urldef\tempurl%
\url{https://doi.org/10.1145/2964284.2967292}
\showDOI{\tempurl}


\bibitem[\protect\citeauthoryear{Zhou, Li, and Liu}{Zhou et~al\mbox{.}}{2017}]%
        {occulus}
\bibfield{author}{\bibinfo{person}{Chao Zhou}, \bibinfo{person}{Zhenhua Li},
  {and} \bibinfo{person}{Yao Liu}.} \bibinfo{year}{2017}\natexlab{}.
\newblock \showarticletitle{A Measurement Study of Oculus 360 Degree Video
  Streaming}. In \bibinfo{booktitle}{\emph{Proceedings of the 8th International
  Conference on Multimedia Systems}} \emph{(\bibinfo{series}{MMSys '17})}.
  \bibinfo{publisher}{ACM}.
\newblock


\end{thebibliography}


% Generated by IEEEtran.bst, version: 1.14 (2015/08/26)
\begin{thebibliography}{10}
\providecommand{\url}[1]{#1}
\csname url@samestyle\endcsname
\providecommand{\newblock}{\relax}
\providecommand{\bibinfo}[2]{#2}
\providecommand{\BIBentrySTDinterwordspacing}{\spaceskip=0pt\relax}
\providecommand{\BIBentryALTinterwordstretchfactor}{4}
\providecommand{\BIBentryALTinterwordspacing}{\spaceskip=\fontdimen2\font plus
\BIBentryALTinterwordstretchfactor\fontdimen3\font minus
  \fontdimen4\font\relax}
\providecommand{\BIBforeignlanguage}[2]{{%
\expandafter\ifx\csname l@#1\endcsname\relax
\typeout{** WARNING: IEEEtran.bst: No hyphenation pattern has been}%
\typeout{** loaded for the language `#1'. Using the pattern for}%
\typeout{** the default language instead.}%
\else
\language=\csname l@#1\endcsname
\fi
#2}}
\providecommand{\BIBdecl}{\relax}
\BIBdecl

\bibitem{cardboard}
\BIBentryALTinterwordspacing
2017. [Online]. Available: \url{https://vr.google.com/cardboard/}
\BIBentrySTDinterwordspacing

\bibitem{feng}
F.~Qian, L.~Ji, B.~Han, and V.~Gopalakrishnan, ``Optimizing 360 video delivery
  over cellular networks,'' in \emph{Proceedings of the 5th Workshop on All
  Things Cellular: Operations, Applications and Challenges}, ser. ATC
  '16.\hskip 1em plus 0.5em minus 0.4em\relax New York, NY, USA: ACM, 2016, pp.
  1--6.

\bibitem{fb}
\BIBentryALTinterwordspacing
Accessed 7th April, 20018. [Online]. Available: \url{https://goo.gl/r8HXMq}
\BIBentrySTDinterwordspacing

\bibitem{youtube}
\BIBentryALTinterwordspacing
2015, accessed 5th April, 2018. [Online]. Available:
  \url{https://goo.gl/W1PRWr}
\BIBentrySTDinterwordspacing

\bibitem{wireline}
\BIBentryALTinterwordspacing
B.~Begole, ``Why the internet pipes will burst when virtual reality takes
  off,'' 2017, accessed January 27th 2017. [Online]. Available:
  \url{http://www.forbes.com/sites/valleyvoices/2016/02/09/why-the-internet-pipes-will-burst-if-virtual-reality-takes-off}
\BIBentrySTDinterwordspacing

\bibitem{bao}
Y.~Bao, H.~Wu, T.~Zhang, A.~Ramli, and X.~Liu, ``Shooting a moving target:
  Motion-prediction-based transmission for 360-degree videos,'' in \emph{IEEE
  International Conference on Big Data}, December 2016.

\bibitem{tile1}
\BIBentryALTinterwordspacing
A.~Zare, A.~Aminlou, M.~M. Hannuksela, and M.~Gabbouj, ``Hevc-compliant
  tile-based streaming of panoramic video for virtual reality applications,''
  in \emph{Proceedings of the 2016 ACM on Multimedia Conference}, ser. MM
  '16.\hskip 1em plus 0.5em minus 0.4em\relax New York, NY, USA: ACM, 2016, pp.
  601--605. [Online]. Available:
  \url{http://doi.acm.org/10.1145/2964284.2967292}
\BIBentrySTDinterwordspacing

\bibitem{Hosseini1}
M.~Hosseini, J.~Peters, and S.~Shirmohammadi, ``Energy-efficient 3d texture
  streaming for mobile games,'' in \emph{Proceedings of Workshop on Mobile
  Video Delivery}, ser. MoViD'14, vol.~5.\hskip 1em plus 0.5em minus
  0.4em\relax New York, NY, USA: ACM, 2013, pp. 5:1--5:6.

\bibitem{D'Acunto}
L.~D'Acunto, J.~van~den Berg, E.~Thomas, and O.~Niamut, ``Using mpeg dash srd
  for zoomable and navigable video,'' in \emph{Proceedings of the 7th
  International Conference on Multimedia Systems}, ser. MMSys '16,
  vol.~34.\hskip 1em plus 0.5em minus 0.4em\relax New York, NY, USA: ACM, 2016,
  pp. 34:1--34:4.

\bibitem{occulus}
C.~Zhou, Z.~Li, and Y.~Liu, ``A measurement study of oculus 360 degree video
  streaming,'' in \emph{Proceedings of the 8th International Conference on
  Multimedia Systems}, ser. MMSys '17.\hskip 1em plus 0.5em minus 0.4em\relax
  ACM, 2017.

\bibitem{yin}
X.~Yin, A.~Jindal, V.~Sekar, and B.~Sinopoli, ``A control-theoretic approach
  for dynamic adaptive video streaming over http,'' in \emph{Proceedings of the
  2015 ACM Conference on Special Interest Group on Data Communication}, ser.
  SIGCOMM '15.\hskip 1em plus 0.5em minus 0.4em\relax New York, NY, USA: ACM,
  2015, pp. 325--338.

\bibitem{jiang}
J.~Jiang, V.~Sekar, and H.~Zhang, ``Improving fairness, efficiency, and
  stability in http-based adaptive video streaming with festive,'' in
  \emph{Proceedings of the 8th International Conference on Emerging Networking
  Experiments and Technologies}, ser. CoNEXT '12.\hskip 1em plus 0.5em minus
  0.4em\relax New York, NY, USA: ACM, 2012, pp. 97--108.

\bibitem{Xiao:2017}
\BIBentryALTinterwordspacing
M.~Xiao, C.~Zhou, Y.~Liu, and S.~Chen, ``Optile: Toward optimal tiling in
  360-degree video streaming,'' in \emph{Proceedings of the 2017 ACM on
  Multimedia Conference}, ser. MM '17.\hskip 1em plus 0.5em minus 0.4em\relax
  New York, NY, USA: ACM, 2017, pp. 708--716. [Online]. Available:
  \url{http://doi.acm.org/10.1145/3123266.3123339}
\BIBentrySTDinterwordspacing

\bibitem{Xie:2017}
\BIBentryALTinterwordspacing
L.~Xie, Z.~Xu, Y.~Ban, X.~Zhang, and Z.~Guo, ``360probdash: Improving qoe of
  360 video streaming using tile-based http adaptive streaming,'' in
  \emph{Proceedings of the 2017 ACM on Multimedia Conference}, ser. MM
  '17.\hskip 1em plus 0.5em minus 0.4em\relax New York, NY, USA: ACM, 2017, pp.
  315--323. [Online]. Available:
  \url{http://doi.acm.org/10.1145/3123266.3123291}
\BIBentrySTDinterwordspacing

\bibitem{Corbillon:2017}
\BIBentryALTinterwordspacing
X.~Corbillon, A.~Devlic, G.~Simon, and J.~Chakareski, ``Optimal set of
  360-degree videos for viewport-adaptive streaming,'' in \emph{Proceedings of
  the 2017 ACM on Multimedia Conference}, ser. MM '17.\hskip 1em plus 0.5em
  minus 0.4em\relax New York, NY, USA: ACM, 2017, pp. 943--951. [Online].
  Available: \url{http://doi.acm.org/10.1145/3123266.3123372}
\BIBentrySTDinterwordspacing

\bibitem{hosseini}
\BIBentryALTinterwordspacing
M.~Hosseini and V.~Swaminathan, ``Adaptive 360 {VR} video streaming: Divide and
  conquer!'' \emph{CoRR}, vol. abs/1609.08729, 2016. [Online]. Available:
  \url{http://arxiv.org/abs/1609.08729}
\BIBentrySTDinterwordspacing

\bibitem{ganainy}
\BIBentryALTinterwordspacing
T.~El-Ganainy and M.~Hafeeda, ``Streaming virtual reality content,''
  \emph{CoRR}, vol. abs/1612.08350, 2016. [Online]. Available:
  \url{https://arxiv.org/abs/1612.08350}
\BIBentrySTDinterwordspacing

\bibitem{Graf}
\BIBentryALTinterwordspacing
M.~Graf, C.~Timmerer, and C.~Mueller, ``Towards bandwidth efficient adaptive
  streaming of omnidirectional video over http: Design, implementation, and
  evaluation,'' in \emph{Proceedings of the 8th ACM on Multimedia Systems
  Conference}, ser. MMSys'17.\hskip 1em plus 0.5em minus 0.4em\relax New York,
  NY, USA: ACM, 2017, pp. 261--271. [Online]. Available:
  \url{http://doi.acm.org/10.1145/3083187.3084016}
\BIBentrySTDinterwordspacing

\bibitem{Petrangeli}
\BIBentryALTinterwordspacing
S.~Petrangeli, V.~Swaminathan, M.~Hosseini, and F.~De~Turck, ``Improving
  virtual reality streaming using http/2,'' in \emph{Proceedings of the 8th ACM
  on Multimedia Systems Conference}, ser. MMSys'17.\hskip 1em plus 0.5em minus
  0.4em\relax New York, NY, USA: ACM, 2017, pp. 225--228. [Online]. Available:
  \url{http://doi.acm.org/10.1145/3083187.3083224}
\BIBentrySTDinterwordspacing

\bibitem{LeFeuvre}
J.~Le~Feuvre and C.~Concolato, ``Tiled-based adaptive streaming using
  mpeg-dash,'' in \emph{Proceedings of the 7th International Conference on
  Multimedia Systems}, ser. MMSys '16, vol.~41.\hskip 1em plus 0.5em minus
  0.4em\relax New York, NY, USA: ACM, 2016, pp. 41:1--41:3.

\bibitem{avc}
T.~Wiegand, G.~J. Sullivan, G.~Bjontegaard, and A.~Luthra, ``Overview of the
  h.264/avc video coding standard,'' \emph{IEEE Transactions on Circuits and
  Systems for Video Technology}, vol.~13, no.~7, pp. 560--576, July 2003.

\bibitem{svc}
H.~Schwarz, D.~Marpe, and T.~Wiegand, ``Overview of the scalable video coding
  extension of the h.264/avc standard,'' \emph{IEEE Transactions on Circuits
  and Systems for Video Technology}, vol.~17, no.~9, pp. 1103--1120, Sept 2007.

\bibitem{low2}
N.~Li, L.~Chen, and S.~H. Low, ``Optimal demand response based on utility
  maximization in power networks,'' in \emph{2011 IEEE Power and Energy Society
  General Meeting}, July 2011, pp. 1--8.

\bibitem{wang}
Z.~Wang and J.~Crowcroft, ``Quality-of-service routing for supporting
  multimedia applications,'' \emph{IEEE Journal on Selected Areas in
  Communications}, vol.~14, no.~7, pp. 1228--1234, Sep 1996.

\bibitem{alpha}
E.~Delage and S.~Mannor, ``Percentile optimization for markov decision
  processes with parameter uncertainty,'' \emph{Operations Research}, vol.~58,
  no.~1, pp. 203--213, 2010.

\bibitem{adam}
\BIBentryALTinterwordspacing
N.~Chen, A.~Agarwal, A.~Wierman, S.~Barman, and L.~L.~H. Andrew, ``Online
  convex optimization using predictions,'' \emph{CoRR}, vol. abs/1504.06681,
  2015. [Online]. Available: \url{http://arxiv.org/abs/1504.06681}
\BIBentrySTDinterwordspacing

\bibitem{trace1}
J.~van~der Hooft, S.~Petrangeli, T.~Wauters, R.~Huysegems, P.~R. Alface,
  T.~Bostoen, and F.~De~Turck, ``{HTTP/2-Based Adaptive Streaming of HEVC Video
  Over 4G/LTE Networks},'' \emph{IEEE Communications Letters}, vol.~20, no.~11,
  pp. 2177--2180, 2016.

\bibitem{facebook}
\BIBentryALTinterwordspacing
E.~Kuzyakov and D.~Pio, ``Next-generation video encoding techniques for 360
  video and vr,'' 2017, accessed January 28th 2017. [Online]. Available:
  \url{https://code.facebook.com/posts/1126354007399553/next-generation-video-encoding-techniques-for-360-video-and-vr/}
\BIBentrySTDinterwordspacing

\end{thebibliography}
\newpage
\clearpage
\section{Appendix}
Here, we prove Theorem 3. Note that if there are enough extra bandwidth the algorithm fetches the tiles in the next higher rate. Suppose $R_{i,k}^{*}$ be the optimal rate for tile $i$ given by the optimal solution of the relaxed problem $\mathcal{P}^{\mathrm{relaxed}}_{\mathrm{eq}}$ for chunk $k$. Suppose that $R_{i,k}$ be the rate given by simple down-quantization where $R_{i,k}$ is defined in the following
\begin{align}
R_{i,k}=\max_{j\in\{0,\ldots,m\}}R_{j}\leq R_{i,k}
\end{align}
Since we are down-quantizing, thus, the QoE bound is off by at most
\begin{align}
\min_{i\in \mathcal{A}_{\alpha,k}}U(R^{*}_{i,k})-U(R_{i,k})
\end{align}%If there is extra bandwidth the algorithm then fetches the tiles in the next hig

Now, note that since the extra bandwidth is used to fetch the tiles in the next higher rate, if $R_{i+1,K}-R_{i,K}\leq L_{K} $ then the algorithm fetches the tiles in the next higher rate $R_{i+1,K}$ for chunk numbered $K$. $L_{K-1}=L_{K}-R_{i+1,K}+R_{i,K}$, if $L_{K-1}\geq R_{i+1,K-1}-R_{i,K}$, then the algorithm fetches the tiles at the next higher rate for chunk numbered $K-1$. 

Thus, our proposed algorithm fetches total rate which is off from the optimal one by at most the following amount:
\begin{align}
\sum_{k=1}^{K}R_{i,k}^{*}-R_{i,k}\leq \max_{j\in \{0,\ldots,m-1\}}(R_{j+1}-R_{j}).
\end{align}

Since $U(\cdot)$ is a concave function, thus, $KU(\sum_{i=1}^{K}x_i/K)\geq\sum_{i=1}^{K}U(x_i)$. Now, the total amount of rate downloaded using our proposed algorithm is upper bounded by at most
\begin{align}
&\sum_{k=1}^{K}U(R^{*}_{i,k})-U(R_{i,k})\nonumber\\
& \leq K[U(R_{i,k}+\sum_{k=1}^{K}(R^{*}_{i,k}-R_{i,k})/K)-U(R_{i,k})]\nonumber\\
& \leq \max_{j\in \{0,\ldots,m-1\}}K [U(R_j+(R_{j+1}-R_{j})/K)-U(R_j)]
\end{align}
Hence, the result follows.
%\bibliographystyle{IEEEtran}
%\bibliography{360degreeref}
\end{document}